\def\oh#1{O(#1)}
\def\set#1{\left\{ #1 \right\}}
\def\union{\bigcup}
\def\abs#1{\left| #1 \right|}
\def\ceil#1{\left\lceil #1 \right\rceil}
\newcommand{\alg}{{\ensuremath{\mathbb{A}}}\xspace}
\newcommand{\SEQ}[1]{\langle #1 \rangle}
\newcommand{\A}{\scriptsize{\ensuremath{\mathcal{ALG}}}\normalsize\xspace}
\newcommand{\ALG}{\scriptsize{\ensuremath{\mathcal{ALG}}}\xspace\normalsize}
\newcommand{\OPT}{\scriptsize{\ensuremath{\mathcal{OPT}}}\normalsize\xspace}
\newcommand{\opt}{\scriptsize{\ensuremath{\mathcal{OPT}}}\normalsize\xspace}
\newcommand{\ks}{k-{\sc Server }}
\newcommand{\kc}{k-{\sc Server }}
\newcommand{\pcv}{Path-Cover\xspace}
\newcommand{\npc}{\scriptsize{\ensuremath{\mathcal{GPC}}}\normalsize\xspace}
\newcommand{\sol}{\scriptsize{\ensuremath{\mathcal{PERM}}}\normalsize\xspace}
\newcommand{\pc}{PathCover\xspace}
\newcommand{\Tau}{\scriptsize{\ensuremath{\mathcal{T}}}\normalsize\xspace\xspace}
\newcommand{\Bita}{B\xspace}
\newcommand{\ca}{competitive analysis}
\newcommand{\Ca}{Competitive analysis}
\newtheorem{coro}{Corollary}
\title{On Advice Complexity of the $k$-server Problem under Sparse Metrics}
\author{Sushmita Gupta\inst{1} \and Shahin Kamali\inst{2} \and Alejandro L\'{o}pez-Ortiz\inst{2}}
\institute{University of Southern Denmark, Odense, Denmark
\and Cheriton School of Computer Science, University of Waterloo, Ontario, Canada}
\begin{document}
\maketitle

\begin{abstract}
We consider the {\ks}problem under the advice model of computation when the underlying metric space is sparse. On one side, we show that an advice of size $\Omega(n)$ is required to obtain a $1$-competitive algorithm for sequences of size $n$, even for the 2-server problem on a path metric of size $N \geq 5$. Through another lower bound argument, we show that at least $\frac{n}{2}(\log \alpha- 1.22)$ bits of advice is required to obtain an optimal solution\footnote{We use $\log n$ to denote $\log_2(n)$.} for metric spaces of treewidth $\alpha$, where $4 \leq \alpha < 2k$. On the other side, we introduce $\Theta(1)$-competitive algorithms for a wide range of sparse graphs, which require advice of (almost) linear size. Namely, we show that for graphs of size $N$ and treewidth $\alpha$, there is an online algorithm which receives $\oh{n (\log \alpha + \log \log N)}$ bits of advice and optimally serves a sequence of length $n$. With a different argument, we show that if a graph admits a system of $\mu$ collective tree $(q,r)$-
spanners, then there is a $(q+r)$-competitive algorithm which receives $\oh{n (\log \mu + \log \log N)}$ bits of advice. Among other results, this gives a $3$-competitive algorithm for planar graphs, provided with $\oh{n \log \log N}$ bits of advice. 
\end{abstract}


\section{Introduction}\label{secIntro}
Online algorithms have been extensively studied in the last few decades. In the standard setting, the input to an online algorithm is a sequence of \textit{requests}, which should be \textit{answered} sequentially. To answer each request, the algorithm has to take an irreversible decision without looking at the incoming requests. For minimization problems, such a decision involves a \textit{cost} and the goal is to minimize the total cost. 

The standard method for analysis of online algorithms is the \textit{competitive analysis}, which compares an online algorithm with an optimal offline algorithm, \OPT. The competitive ratio of an online algorithm\footnote{In this paper we only consider deterministic algorithms.} \ALG is defined as the maximum ratio between the cost of \ALG for serving a sequence and the cost of \OPT for serving the same sequence, within an additive constant factor. 

Although the {\ca} is accepted as the main tool for analysis of online algorithms, its limitations have been known since its introduction; the prime critique being that it is not a good predictor of commonly observed behaviour.
{\Ca}' limitations go to the very heart of the technique itself. Inputs adversarially produced to draw out the worst performance of a particular algorithm are not commonplace in real life applications. Therefore, in essence {\ca} mostly measures the benefit of knowing the future, and not the true difficulty of instances. From the perspective of an online algorithm, the algorithm is overcharged for its complete lack of knowledge about the future.
Advice complexity quantifies this {\it gap in information} that gives \OPT an unassailable advantage over any online strategy. 

Under the advice model for online algorithms \cite{Emek2011,Bock11}, the input sequence $\sigma = \SEQ{r_1 \ldots r_n }$ is accompanied by $b$ bits of advice recorded on an advice tape $\Bita$. For answering the request $r_i$, the algorithm takes an irreversible decision which is a function of $r_1 , \ldots, r_i$ and the advice provided on $\Bita$.  The advice complexity of an online problem is the minimum number of bits which is required to optimally solve any instance of the problem.  In the context of the communication complexity, it is desirable to provide an advice of small size, while achieving high quality solutions. 

We are interested in the advice complexity of the $k$-{\sc Server} problem, as well as the relationship between the size of advice and the competitive ratio of online algorithms. 
To this end, we study the problem for a wide variety of sparse graphs. 

\subsection{Preliminaries}
An instance of the \kc problem includes a metric space $M$, $k$ mobile servers, and a request 
sequence $\sigma$. The metric space can be modelled as an undirected, weighted graph of size $N>k$ (we interchangeably use terms `metric space' and `graph'). Each request 
in the input sequence $\sigma$ denotes a vertex of $M$, and an online algorithm should move one of the servers to the requested vertex to 
\textit{serve} the request. The cost of the algorithm is defined as the total distance moved by all $k$ servers over $\sigma$. 

For any graph $G=(V,E)$, a tree decomposition of $G$ with width $\alpha$ is a pair $(\set{X_i \| i\in I \\ }, T)$ where $\set{X_i \| i\in I}$ is a family of subsets of $V$ (bags), and $T$ is a rooted tree whose nodes are the subsets $X_i$ such that
\begin{itemize}
	\item $\union_{i\in I} X_i = V$ and $\,\max\limits_{i\in I} \abs{X_i} = \alpha \!+\!1.$ 
	\item for all edges $(v,w)\in E$, there exists an $i\in I$ with $v\in X_i$ and $w\in X_i$. 
	\item for all $i,j,k \in I$: if $X_j$ is on the path from $X_i$ to $X_k$ in $T$, then $X_i \cap X_k \subseteq X_j$.
\end{itemize}

The treewidth of a graph $G$ is the minimum width among all tree decompositions of $G$. 
Informally speaking, the tree decomposition is a mapping a graph to a tree so that the vertices associated to each node (bag) of the tree are close to each other, and the treewidth measures how close the graph is to such tree. 

We say that a graph $G = (V,E)$ admits a system of $\mu$ collective tree $(q,r)$-spanners if there is a set $\Tau(G)$ of at most $\mu$ spanning trees of $G$ such that for any two vertices $x$, $y$ of $G$, there exists a spanning tree $T \in \Tau(G)$ such that $d_T(x, y) \leq q \times d_G(x, y) + r$.

For the ease of notation, we assume $k$ denotes the number of servers, $N$, the size of metric space (graph), $n$, the length of input sequence, and $\alpha$, the treewidth of the metric space. 

\subsection{Existing Results}

The advice model for the analysis of the online algorithm was first proposed in \cite{Emek2011}. Under that model, each request is accompanied by an advice of fixed length. A slight variation of the model was proposed in \cite{ISSAC09,Bock11}, which assumes that the online algorithm has access to an advice tape. At any time step, the algorithm may refer to the tape and read any number of advice bits. 
The advice-on-tape model has an advantage that enables algorithms to use sublinear advice (an advantage over the original model). This model has been used to analyze the advice complexity of many online problems, which includes paging~\cite{ISSAC09,MFCS10,SOFSEM11}, 
disjoint path allocation~\cite{ISSAC09}, job shop scheduling~\cite{ISSAC09,SOFSEM11}, $k$-server~\cite{Bock11}, knapsack~\cite{LATIN12}, bipartite graph coloring \cite{COCOON12}, online coloring of paths~\cite{LATA12}, 
set cover~\cite{CSR12,ECCC12}, maximum clique~\cite{ECCC12}, and graph exploration~\cite{SIROCCO12}. In this paper, we adopt this definition of the advice model.

For the \kc problem on general metrics, there is an algorithm which achieves a competitive ratio of $k^{\oh{1/b}}$ for $b \leq k$, when provided with $bn$ bits of advice \cite{Emek2011}. This ratio was later improved to $2  \lceil \lceil \log k \rceil / (b-2) \rceil $ in \cite{Bock11}, and then to $\lceil \lceil \log k \rceil / (b-2)  \rceil$ in \cite{WAOA11}. Comparing these results with the lower bound $k$ for the competitive ratio of any online algorithm \cite{Mana88}, one can see how an advice of linear size can dramatically improve the competitive ratio. 

Since the introduction of the \kc problem, there has been an interest in studying the problem under specific metric spaces. These metric spaces include trees \cite{Chrob91}, metric spaces with $k+2$ points \cite{Bart00}, Manhattan space \cite{Bein02}, the Euclidean space \cite{Bein02}, and the cross polytope space (a generalization of a uniform space) \cite{Bein07}. For trees, it is known that the competitive ratio of any online algorithm is at least $k$, while there are online algorithms which achieve this ratio \cite{Chrob91}. Under the advice model, the \kc problem has been studied when the metric space is the Euclidean plane, and an algorithm with constant competitive ratio is presented, which receives $n$ bits of advice for sequences of length $n$ \cite{Bock11}. In \cite{WAOA11}, tree metric spaces are considered and a $1$-competitive algorithm is introduced which receives $2 n + 2 \lceil \log (p+2) \rceil n$ bits of advice, where $p$ is the \textit{caterpillar dimension} of the tree. There are trees for which $p$ is as large as $\lceil \log N \rceil$. Thus, the $1$-competitive algorithm of \cite{WAOA11} needs roughly $2 \lceil n\log \log N \rceil$ bits of advice. Also in \cite{WAOA11}, it is proved that $\oh{n}$ bits of advice are sufficient to obtain an optimal algorithm for paths.

\subsection{Contribution}
Our first result implies that a sublinear advice does not suffice to provide close-to-optimal solution, even if we restrict the problem to 2-server problem on paths of size $N \geq 5$. Precisely, we show that $\Omega(n)$ bits of advice are required to obtain a $c$-competitive algorithm for any value of $c\leq 5/4 - \epsilon$ ($\epsilon$ is an arbitrary small constant). Since there is a 1-competitive algorithm which receives $\oh{n}$ bits of advice for paths \cite{WAOA11}, we conclude that $\Theta(n)$ bits of advice are necessary and sufficient to obtain a 1-competitive algorithm for these metrics.
Through another lower bound argument, we show that any online algorithm requires an advice of size at least $\frac{n}{2}(\log \alpha- 1.22)$ bits to be optimal on a metric of treewidth $\alpha$, where $4 \leq \alpha < 2k$.

On the positive side, we introduce an online algorithm which optimally serves any input sequence, when provided $\oh{n (\log \alpha + \log \log N)}$ bits of advice. For graphs with constant treewidth, the advice size is almost linear. Considering that an advice of linear size is required for 1-competitive algorithms (our first lower bound), the algorithm has an advice of nearly optimal size.
For graphs with treewidth $\alpha \in \Omega(\lg N)$
the advice size is $\oh{n \log \alpha}$, which is asymptotically tight when $4 \leq \alpha < 2k$, because at least $\frac{n}{2}(\log \alpha- 1.22)$ bits are required to be optimal in this case (our second lower bound). 

We provide another algorithm that achieves a competitive ratio of at most $q+r$, when the metric space admits a system of $\mu$ collective tree $(q,r)$-spanners. This 
algorithm receives $\oh{n (\log \mu + \log \log N)}$ bits of advice. This yields competitive algorithms for a large family of graphs, e.g., a 3-competitive algorithm for planar graphs, which reads $\oh{n (\log \log N)}$ bits of advice.


\section{Lower Bounds}

\subsection{2-server Problem on Path Metric Spaces}
In this section, we show that an advice of sublinear size does not suffice to achieve close-to-optimal solutions, even for the 2-server problem on a path metric space of size $N \geq 5$. 
Without loss of generality, we only consider online algorithms which are \textit{lazy} in the sense that they move only one server at the time of serving a request. 
It is not hard to see that any online algorithm can be converted to a lazy algorithm without an increase in its cost. Hence, a lower bound for the performance of lazy algorithms applies to all online algorithms. In the reminder of this section, the term {\it online algorithm} means a lazy algorithm.

Consider a path of size $N \geq 5$ which is horizontally aligned and the vertices are indexed from $1$ to $N$. Assume that the servers are initially positioned at vertices $2$ and $4$. We build a set of instances of the problem, so that each instance is formed by $m = n/7$ \textit{round}s of requests. Each round is defined by requests to vertices $(3, 1|5, 3, 2, 4, 2, 4)$, where the second request of a round can be either to vertex $1$ or vertex $5$. Each round ends with consecutive requests to vertices 2 and 4. So, it is reasonable to move servers to these vertices for serving the last requests of each round. This intuition is formalized in the following lemma.

\begin{lemma}
Consider an algorithm $A$ that serves an instance of the problem as defined above. There is another algorithm $A'$ with a cost which is not more than that of $A$, for which the servers are positioned at vertices $2$ and $4$ before starting to serve each round.
\end{lemma}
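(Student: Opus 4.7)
The plan is to take $A'$ to be the algorithm that, in each round, serves the seven requests via an optimal trajectory that both begins and ends at the configuration $\{2,4\}$, at some fixed per-round cost $c^{\ast}$. Then $A'$ has total cost $m c^{\ast}$ and by construction its servers sit at $\{2,4\}$ before every round, so the task reduces to proving $\text{cost}(A) \geq m c^{\ast}$.

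The crucial local observation concerns the last four requests $(r_4, r_5, r_6, r_7) = (2,4,2,4)$. Given the pair of server positions $(a,b)$ just before $r_4$, I would show that the minimum cost of serving these four requests is $M(a,b) := \min\{|a-2|+|b-4|,\,|a-4|+|b-2|\}$, attained by the ``stay'' strategy that sends one server to vertex~$2$ and the other to vertex~$4$ and leaves both there. Any alternative must shuttle a server between $2$ and $4$ at least once, which costs at least~$2$ extra and cannot be recouped; in particular every optimal response to these four requests ends in configuration $\{2,4\}$.

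The main obstacle is that this local statement does not iterate round by round in the naive way: once $A'$ is reset to $\{2,4\}$ its starting configuration differs from $A$'s in the next round, and $A$ might actually exploit a shifted starting configuration to pay less than $c^{\ast}$ on some individual round (for example $D(\{1,4\},\{2,4\}) = 3 < c^{\ast}$ when $r_2=1$, where $D(s,v)$ denotes the min cost of one round from $s$ to $v$). I would address this via amortization with the potential $\Psi(s) = d_M(s,\{2,4\})$, the min-matching distance between multisets of positions. The mimicry inequality $D(\{2,4\}, v) \leq \Psi(s) + D(s,v)$, obtained by first shifting the servers from $\{2,4\}$ to $s$ and then running the optimal $s\to v$ trajectory, combined with the sub-claim $D(\{2,4\}, v) \geq c^{\ast} + \Psi(v)$ (verified by a short case analysis over the few reachable end configurations $v = \{x,4\}$), yields the amortized per-round bound
\[
 D(s,v) \,\geq\, c^{\ast} + \Psi(v) - \Psi(s).
\]

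Summing over the $m$ rounds with end-of-round configurations $t_0 = \{2,4\}, t_1, \ldots, t_m$ then telescopes to
\[
 \text{cost}(A) \,\geq\, \sum_{i=1}^{m} D(t_{i-1}, t_i) \,\geq\, m c^{\ast} + \Psi(t_m) - \Psi(t_0) \,\geq\, m c^{\ast},
\]
using $\Psi(\{2,4\}) = 0$ and $\Psi \geq 0$, which finishes the argument. The symmetric case in which some rounds have $r_2 = 5$ instead of $r_2 = 1$ is identical by the mirror-symmetry of the round about vertex~$3$.
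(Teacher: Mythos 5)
There is a genuine gap, and it is conceptual rather than computational. Your $A'$ is the clairvoyant ``reset'' algorithm that serves every round optimally at cost $c^{\ast}=4$ starting and ending at $\{2,4\}$. To pay only $4$ on a round, that algorithm must already know, when the first request (to vertex $3$) arrives, whether the second request will be $1$ or $5$, since it must choose the left or right server accordingly. So your $A'$ is not an online algorithm, and it is not obtainable from $A$'s advice: making it online would require one extra bit per round, which is exactly the information whose necessity this whole section is trying to establish. The lemma is not a free-standing existence statement; it is a ``without loss of generality'' step used inside the reduction to binary string guessing (Lemma~\ref{redu1} runs the $2$-server algorithm, interprets its first move in each round as a guess, and charges that algorithm's advice via Lemma~\ref{servi}). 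For that, $A'$ must be an online algorithm in the same model as $A$, using no more advice and costing no more, with servers at $\{2,4\}$ before each round. Your construction instead reduces the lemma to ``the offline optimum resets and costs at most $A$,'' which is true but cannot support the subsequent argument --- plugging your $A'$ into the reduction would be circular. The paper's proof avoids this by a local exchange on $A$ itself: if $A$ ends a round without a server at vertex $2$, then it served the trailing requests $2,4,2,4$ with a single server at cost at least $6$; replacing that by restoring the configuration $\{2,4\}$ costs at most $3$, saving at least $3$, and mimicking $A$ afterwards with the displaced server costs at most $3$ extra since every request is within distance $3$ of vertex $2$. This transformation is simulatable online with $A$'s own advice, which is what the downstream reduction needs.

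That said, the amortized part of your argument is not wasted: the potential $\Psi(s)=d_M(s,\{2,4\})$ together with the per-round bound $D(s,v)\geq c^{\ast}+\Psi(v)-\Psi(s)$ is a clean way to prove that \emph{every} algorithm pays at least $4$ per round, i.e.\ that the offline optimum on these instances is exactly $4n/7$ --- a fact the paper does use (without proof) in Theorem~\ref{lowerMain}. But as a proof of this lemma it targets the wrong object: you still need to transform the given online algorithm $A$ into an online, advice-preserving $A'$ with the reset property, not merely exhibit some cheap resetting algorithm.
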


\begin{proof}
Consider the first round $R_t$ such that $A$ does not have a server positioned at vertex $2$ before serving the requests in $R_t$ (since the last request of the previous round has been to vertex $4$, there is necessarily  a vertex located at vertex 4). This implies that the last four requests of the previous round $R_{t-1}$ (to vertices $2,4,2,4$) are served by the same server $s_1$. So, $A$ pays a cost of at least 6 for serving these requests. Consider an algorithm $A'$, which moves the same servers as $A$ does for serving all requests before the last four requests of $R_{t-1}$. To serve the last four requests of $R_{t-1}$, $A'$ moves the servers to vertices $2$ and $4$. This requires a cost of at most $3$ (The worst case happens when the servers are positioned at 4 and 5 before the first request to 2; in this case the algorithm pays a cost of 2 to move the left server to position 2 and a cost of 1 to move the right server to position 4 on the next request). Hence, $A'$ pays a cost of at most three for the last four requests in $R_{t-1}$ and, when compared to $A$, saves a cost of at least 3 units in round $R_{t-1}$. At the beginning of round $R_{t}$, the servers of $A$ are positioned at $4$ and $x$ ($x\notin\{2,4\}$), and the servers of $A'$ are at $2$ and $4$. 
In future rounds, $A'$ moves the server positioned at 2 (resp. 4) in the same way that $A$ moves the server position at $x$ (resp. 4). The total cost would be the same for both algorithms, except that the cost for the first request which is served by the server positioned at $x$ in $A$ might be at most 3 units more when served by the server positioned at vertex 2 in $A'$. This is because the distance from vertex $2$ to any other vertex is at most $3$.  

To summarize, when compared to $A$, $A'$ saves a cost of at least $3$ units on the requests in the $R_{t-1}$ and pays an extra cost of at most $3$ for the rounds after $R_{t-1}$. Hence, the cost of $A'$ is not more than that of $A$. To prove the lemma, it suffices apply the above procedure on all rounds for which there is no server at position 2 at the beginning of the round. The result would be an algorithm which has servers located at positions 2 and 4 before the start of any round. \qed
\end{proof}

According to the above lemma, to provide a lower bound on the performance of online algorithms, we can consider only those algorithms which keep servers at positions $2$ and $4$ before each round.
For any input sequence, we say a round $R_t$ has type 0 if the round is formed by requests to vertices $(3, 1, 3, 2, 4, 2, 4)$ and has type $1$ otherwise, i.e., when it is formed by requests to vertices $(3, 5, 3, 2, 4, 2, 4)$.
The first request of a round is to vertex $3$. Assume the second request is to vertex $5$, i.e., the round has type 1. An algorithm can move the left vertex $s_l$ positioned at $2$ to serve the first request (to vertex 3) and the right server $s_r$ positioned at 4 to serve the second request (to vertex 5). For serving other requests of the round, the algorithm can move the servers to their initial positions, and pay a total cost of 4 for the round (see Figure \ref{fig:rightGuess}). Note that this is the minimum cost that an algorithm can pay for a round. This is because there are four requests to distinct vertices and the last two are request to the initial positions of the servers (i.e., vertices 2 and 4).

Now assume that the algorithm moves the right vertex $s_r$ to serve the first request (to vertex 3). The algorithm has to serve the second request (to vertex 5) also with $s_r$. The third request (to vertex 3) can be served by any of the servers. Regardless, the cost of the algorithm will not be less than 6 for the round (see Figure \ref{fig:wrongGuess}). 
With a symmetric argument, in case the second request is to vertex $1$ (i.e., the round has type 0), if an algorithm moves the right server to serve the first request it can pay a total cost of 4, and if it moves the left server for the first request, it pays a cost of at least 6 for the round. 

In other words, an algorithm should `guess' the type of a round at the time of serving the first request of the round. In case it makes a right guess, it can pay a total cost of 4 for that round, and if it makes a wrong guess, it pays a cost of at least 6. This relates the problem to the \textit{Binary String Guessing Problem}. 

\begin{figure}[!t]
\centering
 \subfigure[In case of a right guess for the type of a round, the algorithm can pay a cost of 4.]{\includegraphics[width=0.4\columnwidth]{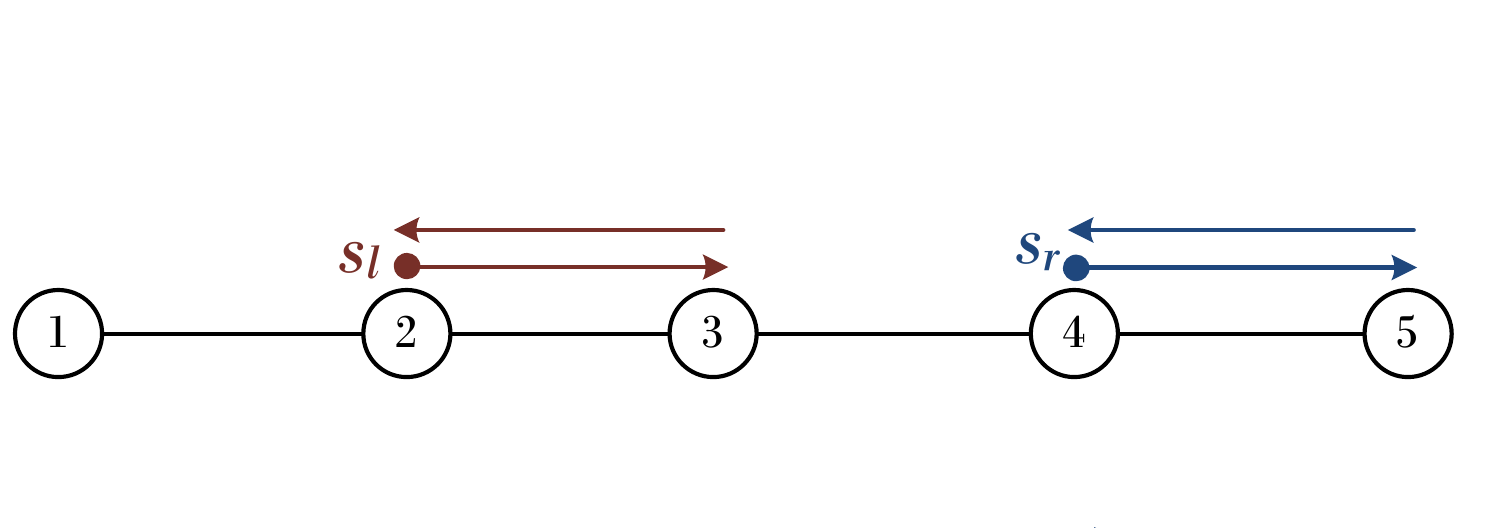}\label{fig:rightGuess}}
\hspace{.095\columnwidth}
 \subfigure[In case of a wrong guess for the type of a round, the algorithm pays a cost of at least 6.]{\includegraphics[width=0.4\columnwidth]{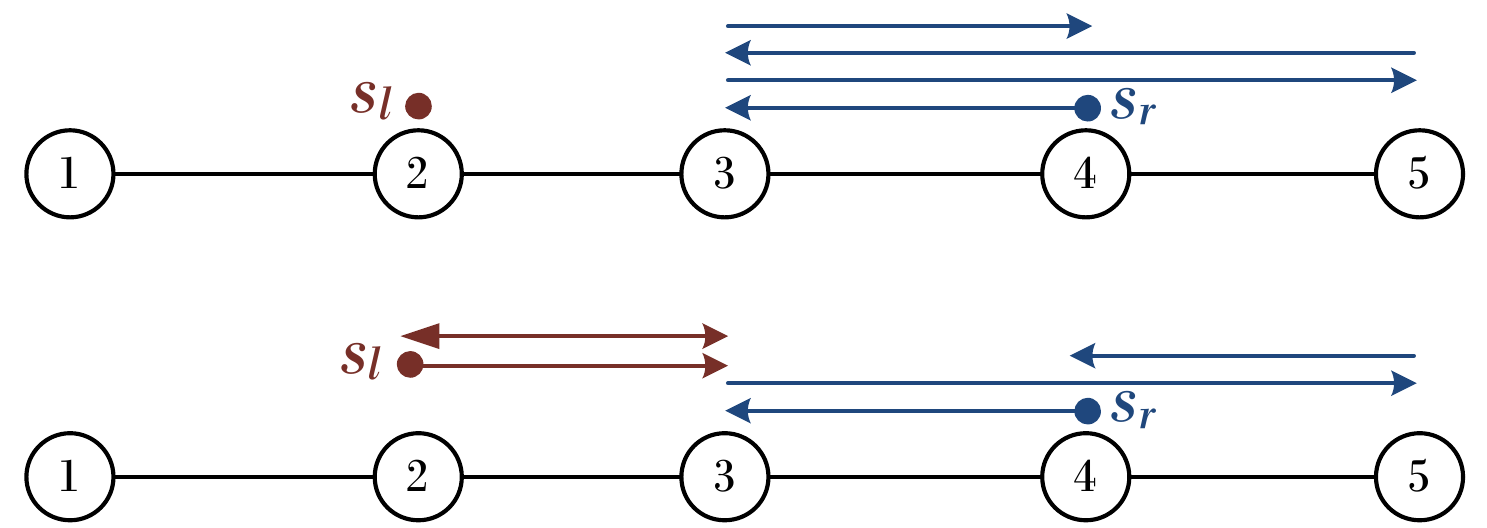}\label{fig:wrongGuess}}
\caption{The cost of an algorithm for a round of type 1, i.e., request to vertices $(3, 5, 3, 2, 4, 2, 4)$. The servers are initially located at 2 and 4.}
\label{packs}%
\end{figure}

\begin{definition}[\cite{Emek2011,ECCC12}]
The {\em Binary String Guessing Problem with known history ($2$-SGKH)} is the following online problem. The input is a bitstring of size $n’$, and the bits are revealed one by one. For each bit $b_t$, the online algorithm \alg must guess if it is a $0$ or a $1$. After the algorithm has made a guess, the value of $b_t$ is revealed to the algorithm.
\end{definition}

\begin{lemma}[\cite{ECCC12}] \label{servi}
On an input of length $m$, any deterministic algorithm for $2$-SGKH that is guaranteed to guess correctly on more than $\alpha m$ bits, for $1/2 \leq \alpha < 1$, 
needs to read at least $(1 + (1 - \alpha) \log(1 - \alpha) + \alpha \log \alpha) m$ bits of advice.
\end{lemma}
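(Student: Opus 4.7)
The plan is a standard information-theoretic counting argument. Suppose \alg is a deterministic algorithm for $2$-SGKH that reads at most $b$ bits from the advice tape on every input of length $m$ and is guaranteed to guess correctly on more than $\alpha m$ bits. Since \alg never consults advice beyond position $b$, its entire behavior on any input is determined by the first $b$ bits of the tape, so there are at most $2^b$ effectively distinct advice strings to consider.

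The crux is a reconstruction claim: fix an advice string $\phi$, and for each input $\sigma \in \{0,1\}^m$ define its \emph{error pattern} $e(\sigma) \in \{0,1\}^m$ by letting $e(\sigma)_t = 1$ iff \alg (with advice $\phi$) guesses $\sigma_t$ incorrectly. The map $\sigma \mapsto e(\sigma)$ is injective: given $\phi$ and $e$, one reconstructs $\sigma$ inductively, since once $\sigma_1, \ldots, \sigma_{t-1}$ are determined, the algorithm's guess $g_t$ is a deterministic function of these revealed bits and $\phi$, and then $\sigma_t = g_t \oplus e_t$.

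Because \alg is correct on more than $\alpha m$ positions of every input, each successful $\sigma$ satisfies $|e(\sigma)| < (1-\alpha)m$. By injectivity, a single advice handles at most $\sum_{i=0}^{\lceil(1-\alpha)m\rceil - 1}\binom{m}{i}$ inputs, and since every input of $\{0,1\}^m$ must be handled by some advice,
\[
2^b \cdot \sum_{i=0}^{\lceil(1-\alpha)m\rceil - 1}\binom{m}{i} \;\geq\; 2^m.
\]
Applying the standard entropy bound $\sum_{i=0}^{\beta m}\binom{m}{i} \leq 2^{H(\beta)m}$ with $\beta = 1-\alpha \leq 1/2$ and $H(\beta) = -\beta\log\beta - (1-\beta)\log(1-\beta)$, and taking logarithms, yields $b \geq m(1 - H(1-\alpha)) = m\bigl(1 + (1-\alpha)\log(1-\alpha) + \alpha\log\alpha\bigr)$, as required.

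The main obstacle is spotting the reconstruction identity $\sigma_t = g_t \oplus e_t$: the feedback that reveals each bit after the guess is paradoxically what enables the compact encoding of an input by its error pattern. Once injectivity is in hand, the rest is routine counting and entropy estimation, with a little care in the treatment of ``more than $\alpha m$'' versus ``strictly fewer than $(1-\alpha)m$ errors''.
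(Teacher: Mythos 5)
Your argument is correct. Note that the paper itself gives no proof of this lemma; it is imported verbatim from the cited reference on the string guessing problem, so there is no in-paper argument to compare against. Your reconstruction-via-error-patterns encoding (fix the advice prefix, note that $\sigma_t = g_t \oplus e_t$ makes the input recoverable from its error pattern, then count Hamming balls of radius below $(1-\alpha)m$ and apply the entropy bound $\sum_{i \le \beta m}\binom{m}{i} \le 2^{H(\beta)m}$ for $\beta = 1-\alpha \le 1/2$) is exactly the standard covering/counting argument used in that source, and your handling of the strict inequality (``more than $\alpha m$ correct'' giving at most $\lceil(1-\alpha)m\rceil - 1$ errors) is sound.
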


We reduce the $2$-SGKH problem to the $2$-server problem on paths. 
\begin{lemma}\label{redu1}
If there is a 2-server algorithm with cost at most $\gamma n$ ($\gamma \geq 4/7$) for an instance of length $n$ (as defined earlier), then 
there is a 2-SKGH algorithm which guesses at least $\frac{6-7\gamma}{2} m$ bits correctly for any input bit string of size $m = n/7$.
\end{lemma}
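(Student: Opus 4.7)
The plan is to give a direct reduction that simulates the given 2-server algorithm and extracts a guess for each bit of the 2-SGKH instance. Given a bitstring $b_1,\ldots,b_m$ of length $m = n/7$, interpret each $b_t$ as encoding the type of round $R_t$: if $b_t=1$ the seven requests of $R_t$ are $(3,5,3,2,4,2,4)$, and if $b_t=0$ they are $(3,1,3,2,4,2,4)$. Concatenating the rounds produces a valid 2-server instance of length $n$. By the previous lemma we may assume, without loss of generality, that the 2-server algorithm has its servers at positions $2$ and $4$ at the start of every round.

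The 2-SGKH algorithm processes the bits one at a time by running the 2-server algorithm as a black box. When a guess for $b_t$ is required, the 2-SGKH algorithm first feeds the request to vertex $3$ (which is the first request of $R_t$ regardless of its type) to the 2-server algorithm and observes which server it moves. If the left server (at vertex $2$) is moved, we guess $b_t = 1$; otherwise the right server (at vertex $4$) is moved and we guess $b_t = 0$. The analysis preceding the lemma establishes the key dichotomy: if the first move matches the round's type in this sense, the round can be completed at cost $4$ and this is the unique way to do so; otherwise the round's cost is at least $6$. Therefore the 2-SGKH guess is correct precisely when the 2-server algorithm incurs cost $4$ on $R_t$. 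Once the guess is emitted, the true $b_t$ is revealed, so the remaining six requests of $R_t$ are determined and fed to the 2-server algorithm before we move on to round $R_{t+1}$.

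The remainder is a straightforward counting argument. Let $c$ be the number of rounds in which the guess was correct. Then the total cost of the 2-server algorithm on the constructed instance is at least $4c + 6(m-c) = 6m - 2c$. Combined with the hypothesis that this cost is at most $\gamma n = 7\gamma m$, this gives
\[
6m - 2c \;\leq\; 7\gamma m \qquad\Longrightarrow\qquad c \;\geq\; \frac{6 - 7\gamma}{2}\, m,
\]
which is the desired bound. The hypothesis $\gamma \geq 4/7$ is precisely what ensures $c \leq m$, so the lower bound is nontrivial on both sides.

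I do not anticipate a genuine obstacle. The one subtle point is that the map \emph{first server moved} $\mapsto$ \emph{guessed bit} must be well-defined at the moment the guess is required; this is secured by the laziness convention from the start of the section (at most one server moves per request, so at the first request of a round exactly one moves, since no server sits at vertex $3$) together with the previous lemma that pins down the pre-round configuration. With these in place the reduction is essentially syntactic.
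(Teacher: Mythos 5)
Your proposal is correct and follows essentially the same route as the paper: encode the bitstring as round types, let the server chosen for the first request of each round define the guess, and use the per-round cost dichotomy ($4$ for a correct guess, at least $6$ for a wrong one) to get cost $\geq 6m-2c$ and hence $c \geq \frac{6-7\gamma}{2}m$. The only cosmetic difference is that the paper parametrizes by the fraction $\beta$ of correct guesses and sets $\gamma=(6-2\beta)/7$, while you solve the same inequality directly for $c$.
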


\begin{proof}
Let $B$ denote a bit string of length $m=n/7$, which is the input for the 2-SKGH problem. Consider the instance of the $2$-server problem in which the types of rounds are defined by $B$. Precisely, the $t$'th round has type 0 if the $t$'th bit of $B$ is 0, and has type 1 otherwise. We run the $2$-server algorithm on such an instance. At the time of serving the first request of the $t$'th round, the $2$-server algorithm guesses the type of round $t$ by moving the left or right server. In particular, it guesses the type of the round to be 0 if it moves the right server for the first request, and 1, otherwise. Define a $2$-SGKH algorithm which performs according to the 2-server algorithm, i.e., it guesses the $t$'th bit of $B$ as being 0 (resp. 1) if the 2-server algorithm guesses the $t$'th round as having type 0 (resp. 1). As mentioned earlier, the $2$-server algorithm pays a cost of 4 for the round for each right guess, and pays cost of at least 6 for each wrong guess. So, the cost of the algorithm is at least $4 \beta m + 6 (1-\beta) m = (6 - 2\beta)m$, in which $\beta m$ is the number of correct guesses ($\beta \leq 1$). Consequently, if an algorithm has cost at most equal to this value, it correctly guesses the types of at least $\beta m$ rounds, i.e., it correctly guesses at least $\beta m$ bits of a bit string of length $m$. Defining $\gamma$ as $(6 - 2\beta)/7$ completes the proof. \qed
\end{proof}

Lemmas \ref{servi} and \ref{redu1} give the following theorem.

\begin{theorem} \label{lowerMain}
On input of length $n$, any deterministic algorithm for the $2$-server problem which has a competitive ratio smaller than $\tau$ ($1< \tau < 5/4$), needs to read at least $(1+(2\tau-2) \log (2\tau-2) + (3-2\tau)\log(3-2\tau)) n/7$ bits of advice, even if the metric space is restricted to being a path of size $N \geq 5$. 
\end{theorem}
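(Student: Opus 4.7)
The plan is to combine the two lemmas in a straightforward compositional argument. The essential observation I would start from is that the instances constructed above admit an optimal offline cost of exactly $4$ per round, because \OPT{} knows the type of every round in advance and can therefore always make the ``right guess'' at the first request; summing over the $m = n/7$ rounds, $\text{\OPT}(\sigma) \le 4m = 4n/7$. So if an online algorithm achieves competitive ratio strictly below $\tau$ with $1 < \tau < 5/4$, its cost on any such instance is bounded by $4\tau n/7$ (up to the additive constant hidden in competitive analysis, which only affects lower-order terms).

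Next, I would feed this cost bound into Lemma~\ref{redu1}. Setting $\gamma = 4\tau/7$, the lemma yields a $2$-SGKH algorithm that guesses correctly on at least $\beta m$ bits where $\beta = (6 - 7\gamma)/2 = 3 - 2\tau$. The range $1 < \tau < 5/4$ is exactly what I need here: the upper bound $\tau < 5/4$ ensures $\beta > 1/2$, so the guessing algorithm beats random and Lemma~\ref{servi} applies non-trivially, while $\tau > 1$ keeps $\beta < 1$.

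Now I would invoke Lemma~\ref{servi} with $\alpha := \beta = 3 - 2\tau$ on an input of length $m = n/7$. Substituting $\alpha = 3 - 2\tau$ (so $1 - \alpha = 2\tau - 2$) into the expression $1 + (1-\alpha)\log(1-\alpha) + \alpha \log \alpha$ yields the claimed bound
\[
\bigl(1 + (2\tau - 2)\log(2\tau - 2) + (3 - 2\tau)\log(3 - 2\tau)\bigr)\, \frac{n}{7}
\]
on the number of advice bits needed by the reduced guessing algorithm, and hence by the original $2$-server algorithm.

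I do not expect any real obstacle here; the only thing that needs care is verifying that the parameter regime is consistent (the two inequalities on $\tau$ correspond precisely to Lemma~\ref{servi}'s hypothesis $1/2 \le \alpha < 1$) and that the $\gamma \ge 4/7$ precondition of Lemma~\ref{redu1} is met (it holds because $\gamma = 4\tau/7$ and $\tau > 1$). The additive constant in the definition of competitive ratio contributes only a lower-order term and can be absorbed into the asymptotics, so the stated bound follows directly from the chain Lemma~\ref{redu1} $\rightarrow$ Lemma~\ref{servi}.
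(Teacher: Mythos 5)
Your proposal is correct and follows essentially the same route as the paper's proof: bound \OPT's cost by $4n/7$, use the competitive ratio to cap the online cost at $4\tau n/7$, apply Lemma~\ref{redu1} to extract a $2$-SGKH algorithm guessing at least $(3-2\tau)m$ bits, and then invoke Lemma~\ref{servi} with $\alpha = 3-2\tau$, checking that $1<\tau<5/4$ places $\alpha$ in $[1/2,1)$. Your explicit verification of the $\gamma \ge 4/7$ precondition and the remark on the additive constant are fine refinements of the same argument.
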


\begin{proof}
There is an offline 2-server algorithm which pays a cost of 4 for each round and consequently pays a total cost of $4m = 4n/7$. Hence, in order to have a competitive ratio of $\tau$, the cost of an algorithm should be at most $4\tau n /7$. According to Lemma \ref{redu1}, this requires the existence of a 2-SKGH algorithm which correctly guesses at least $(3-2\tau)m$ bits of a bit string of length $m$. By Lemma \ref{servi}, this needs reading at least $(1 + (1 - (3-2\tau)) \log(1 - (3-2\tau)) + (3-2\tau) \log (3-2\tau)) m = (1+(2\tau-2) \log (2\tau-2) + (3-2\tau)\log(3-2\tau)) n/7$ bits of advice. Note that $3-2\tau$ is in the range required by the lemma when $1< \tau < 5/4$. \qed
\end{proof}

For a competitive ratio of $\tau = 5/4$, the formula in Theorem \ref{lowerMain} takes the value 0 and thus does not provide a non-trivial bound. However, for doing strictly better than 5/4, a linear number of bits of advice is required. For example, to achieve a competitive ratio of $\tau = 6/5$, at least $.007262 n$ bits of advice are needed, and for the improved ratio of $\tau= 7/6$, at least $.020425 n$ bits of advice are needed. To achieve a $1$-competitive algorithm, $n/7$ bits of advice are required.

\subsection{Metrics with Small Treewidth}
We show that there are instances of the \ks problem in a metric space with treewidth $\alpha$, for which any online algorithm requires at least $\frac{n}{2}(\log \alpha- 1.22)$ bits of advice to perform optimally. Our construction is based on the one described in \cite{Bock11}, where a lower bound for a general metric space is provided. 

We introduce \textit{units graphs} and \textit{module graphs} as follows. A $\gamma$-unit graph is a bipartite graph $G = (U \cup  W, E)$ where $U=\{u_1, \ldots, u_\gamma\}$ contains $\gamma$ vertices, and $W$ contains $2^\gamma-1$ vertices each representing a proper subset of $U$. There is an edge between two vertices $u \in U$ and $w \in W$ iff $u \notin Set(w)$, where $Set(w)$ denotes the set associated with a vertex $w \in W$. 
Let $B_{i} \subseteq W$ denote the set of vertices of $W$ whose associate sets have size $i$. i.e., for $w \in B_i $ we have $|Set(w)| = i$. A \textit{valid request sequence} is defined as $\SEQ{x_{0}, x_{1}, \ldots , x_{\gamma-1}}$ so that for each $i$, $x_{i} \in B_{i}$ and $Set(x_{i}) \subseteq Set(x_{i+1})$. In other words, a valid sequence starts with a request to the vertex associated with the empty set, and with each step one element is added to get a larger set defining the next request. With this definition, one can associate every input sequence $I$ with a unique permutation $\pi$ of set $\{1, 2, \ldots, \gamma\}$.

A $\gamma$-module graph $G$ includes two $\gamma$-unit graphs $G_1 = (U_1 \cup  W_1, E_1)$ and $G_2 =(U_2 \cup  W_2, E_2)$. In such a graph, those vertices in $W_1$ which represent sets of size $i$ are connected to the $(i+1)$'th vertex of $U_2$; the vertices of $W_2$ and $U_1$ are connected in the same manner (see Figure \ref{fig:module}). Consider an instance of the \ks problem defined on a $k$-module graph, where initially all servers are located at the vertices of $U_1$. A valid sequence for the module graph is defined by repetition of \textit{rounds of requests}. Each round starts with a valid sequence for $G_1$ denoted by $\pi_1$, followed by $k$ requests to distinct vertices of $U_2$, a valid sequence for $G_2$, and $k$ requests to distinct vertices of $U_1$. It can be verified that there is a unique optimal solution for serving any valid sequence on $G$, and consequently a separate advice string is required for each sequence \cite{Bock11}. Since there are $(k!)^{(n/(2k))}$ valid sequences of length $n$, 
at least $(n/(2k)) \log (k!) \geq n(\log k-\log e)/2$ bits of advice are required to separate all valid sequences. 

The following lemma implies that the treewidth of the graphs used in the above construction is at most $2k$. 

\begin{figure}[t] 
\begin{minipage}{0.6 \linewidth}
\centering
\includegraphics[scale=0.4]{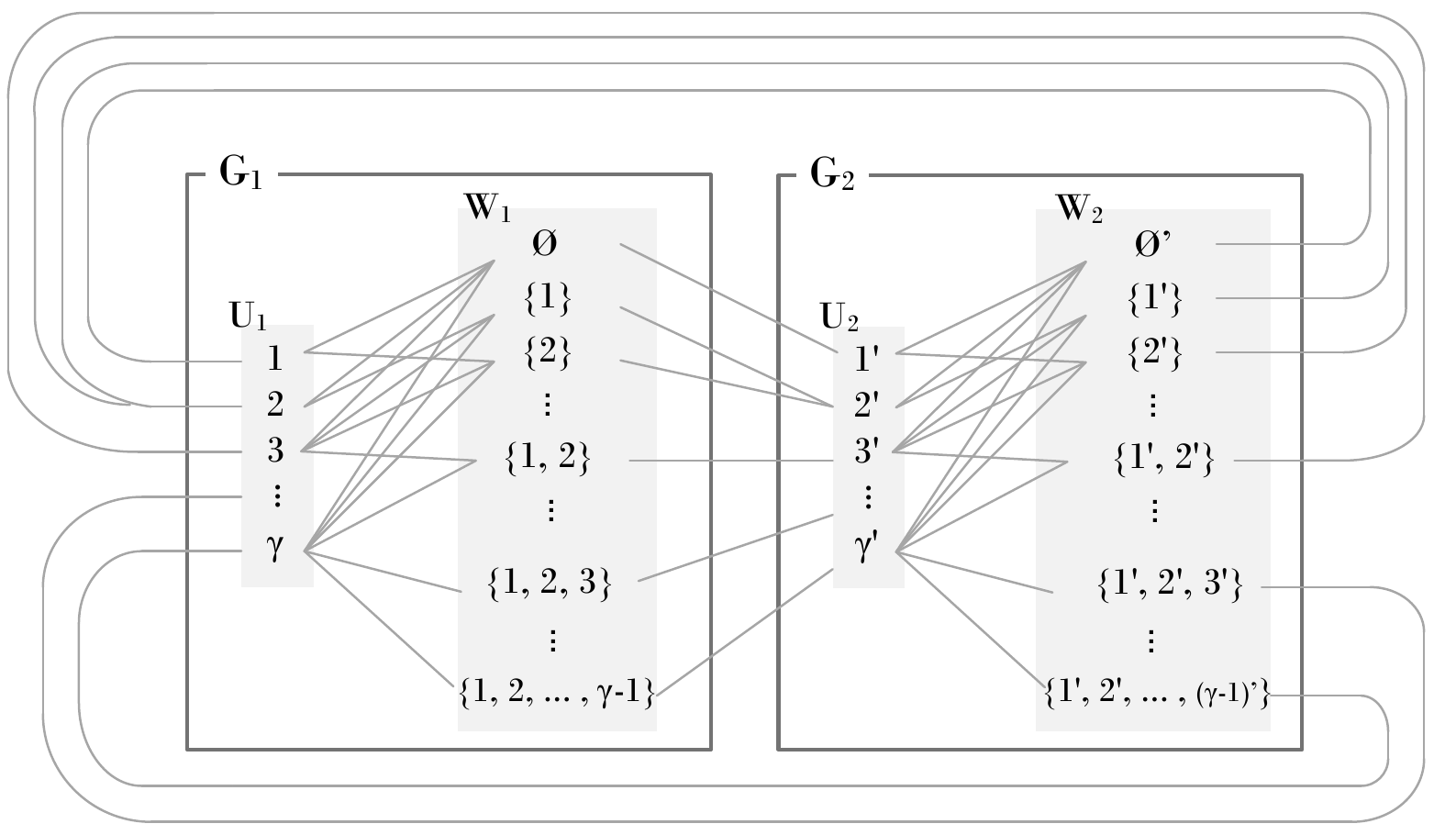}
\label{fig:module1}
\end{minipage}
\hspace{.2cm}
\begin{minipage}{0.4\linewidth}
\centering
\includegraphics[scale=0.4]{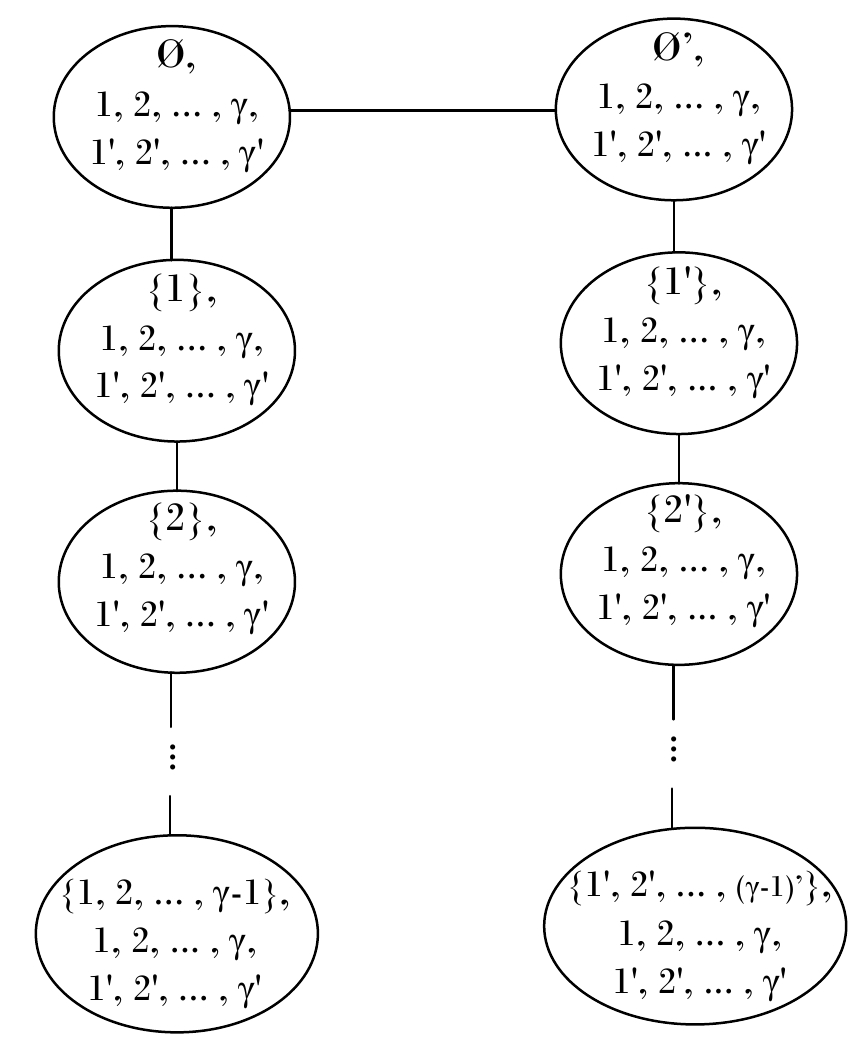}
\label{fig:module2}
\end{minipage}
\caption{A $\gamma$-module graph and a tree decomposition of it with treewidth $2 \gamma$. $G_1$ and $G_2$ are the unit graphs involved in construction of the module graph. \label{fig:module}}
\end{figure}

\begin{lemma} \label{mytd}
 Any $\gamma$-module graph has a tree decomposition of width $2\gamma$.
\end{lemma}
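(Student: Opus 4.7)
The plan is to exhibit an explicit tree decomposition of star shape and verify the three axioms. Let $G$ be a $\gamma$-module graph with $\gamma$-unit graphs $G_1 = (U_1 \cup W_1, E_1)$ and $G_2 = (U_2 \cup W_2, E_2)$, where $|U_1| = |U_2| = \gamma$. Take $T$ to be a star whose center carries the bag $X_0 = U_1 \cup U_2$ of size $2\gamma$. For every $w \in W_1 \cup W_2$, attach a leaf to the center whose bag is $X_w = \{w\} \cup U_1 \cup U_2$, of size $2\gamma + 1$. The maximum bag size is $2\gamma + 1$, giving width $2\gamma$ as required.

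Next I would check that every edge of $G$ is covered by some bag. Every edge inside $G_1$ has the form $(u, w)$ with $u \in U_1$ and $w \in W_1$, and both endpoints lie in the bag $X_w$; edges inside $G_2$ are handled symmetrically. The cross edges, where a vertex $w \in W_1$ is joined to the $(|Set(w)|+1)$-th vertex of $U_2$ (and analogously from $W_2$ to $U_1$), are also covered by $X_w$ since $X_w$ contains all of $U_2$ (resp.\ all of $U_1$). Coverage of vertices is immediate: each $u \in U_1 \cup U_2$ sits in $X_0$ (and in every leaf bag), while each $w \in W_1 \cup W_2$ sits in its own bag $X_w$.

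Finally I would verify the running intersection property. For any vertex $u \in U_1 \cup U_2$, the set of bags containing $u$ is $\{X_0\} \cup \{X_w : w \in W_1 \cup W_2\}$, which is the entire star and is therefore connected. For any $w \in W_1 \cup W_2$, only the single leaf bag $X_w$ contains it, and a single node is trivially connected. Hence all three axioms of a tree decomposition are satisfied, completing the argument.

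There is no substantive obstacle; the only thing to be careful about is that the cross edges force $U_1$ and $U_2$ to appear together in the leaf bags, which is exactly why the width is $2\gamma$ rather than $\gamma$. The construction matches the tree decomposition sketched in Figure~\ref{fig:module}.
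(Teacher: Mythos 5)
Your proposal is correct and matches the paper's construction: the paper also uses bags consisting of $U_1 \cup U_2$ together with exactly one vertex of $W_1 \cup W_2$, observing that any tree spanning these bags works (your star, with the harmless extra center bag $U_1 \cup U_2$, is one such choice). Your verification of edge coverage and the running-intersection property is accurate; the paper leaves these checks implicit.
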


\begin{proof}
Let $G_1=(U_1 \cup  W_1, E_1)$ and $G_2=(U_2 \cup  W_2, E_2)$ be the unit graphs which define the $\gamma$-module graph. 
Define a tree decomposition as follows. Consider $2 \times 2^k$ bags so that each bag contains all vertices from $U_1$ and  $U_2$, and exactly one vertex from $W_1$ or $W_2$. Any tree which spans all these $2 \times 2^k$ bags is a valid tree decomposition (See Figure \ref{fig:module}). Moreover, there are exactly $2\gamma+1$ vertices in each bag which completes the proof.
\qed 
\end{proof}

For metrics with treewidth $\alpha \geq 2k$, the lower bound of $n (\log k-\log e)/2$ is tight, as $n \log k$ bits of advice are sufficient to serve each sequence optimally (by simply indicating the server that \opt would move to serve each request). In what follows, we consider metric spaces with treewidth $\alpha$ such that $4 \leq \alpha \leq 2k$. Assume that $\alpha$ is an even integer and we have $k = m \alpha /2$ for some positive integer $m$. 
Consider a metric space $G_{b}$ defined by a set of $\gamma$-modules where $\gamma = \alpha /2$. There are $k/ \gamma = m$ such modules in $G_b$. Let $M^1 , \ldots, M^{m}$ denote these modules, and let $G_1^i= (U_1^i\cup W_1^i, E_1), G_2^i= (U_2^i\cup W_2^i,E_2)$ denote the unit graphs involved in the construction of $M^i$ ($i \leq m$). For each module $M^i$, select exactly one vertex from $U_1^i$, and connect all of the selected vertices to a common \textit{source}. This makes $G_b$ a connected graph (see Figure \ref{fig:GB}).

\begin{figure}[t]
\begin{minipage}{0.5 \linewidth}
\centering
\includegraphics[width=0.8\columnwidth]{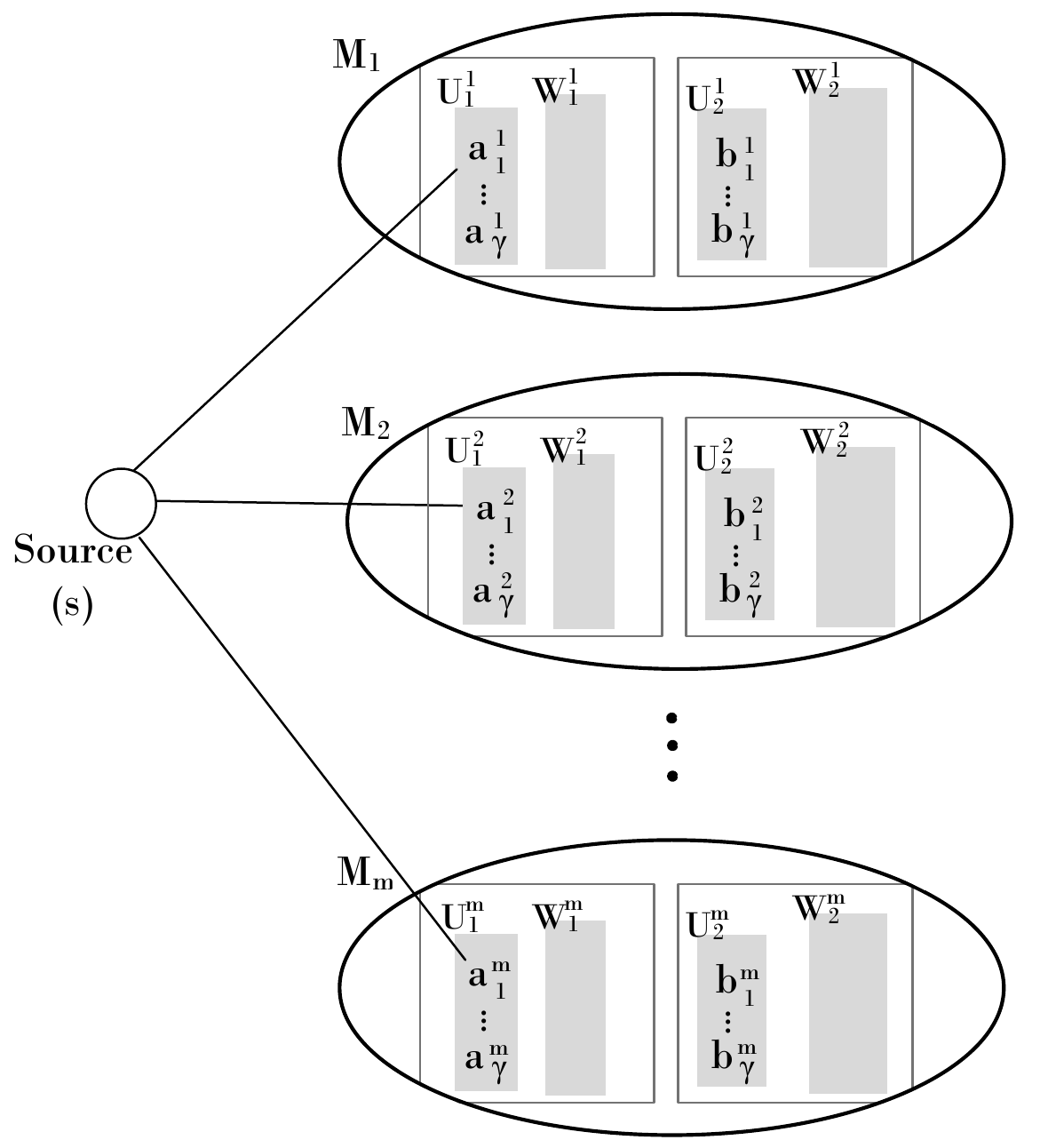}
\label{fig:figure1}
\end{minipage}
\begin{minipage}{0.5\linewidth}
\centering
\includegraphics[width=0.8\columnwidth]{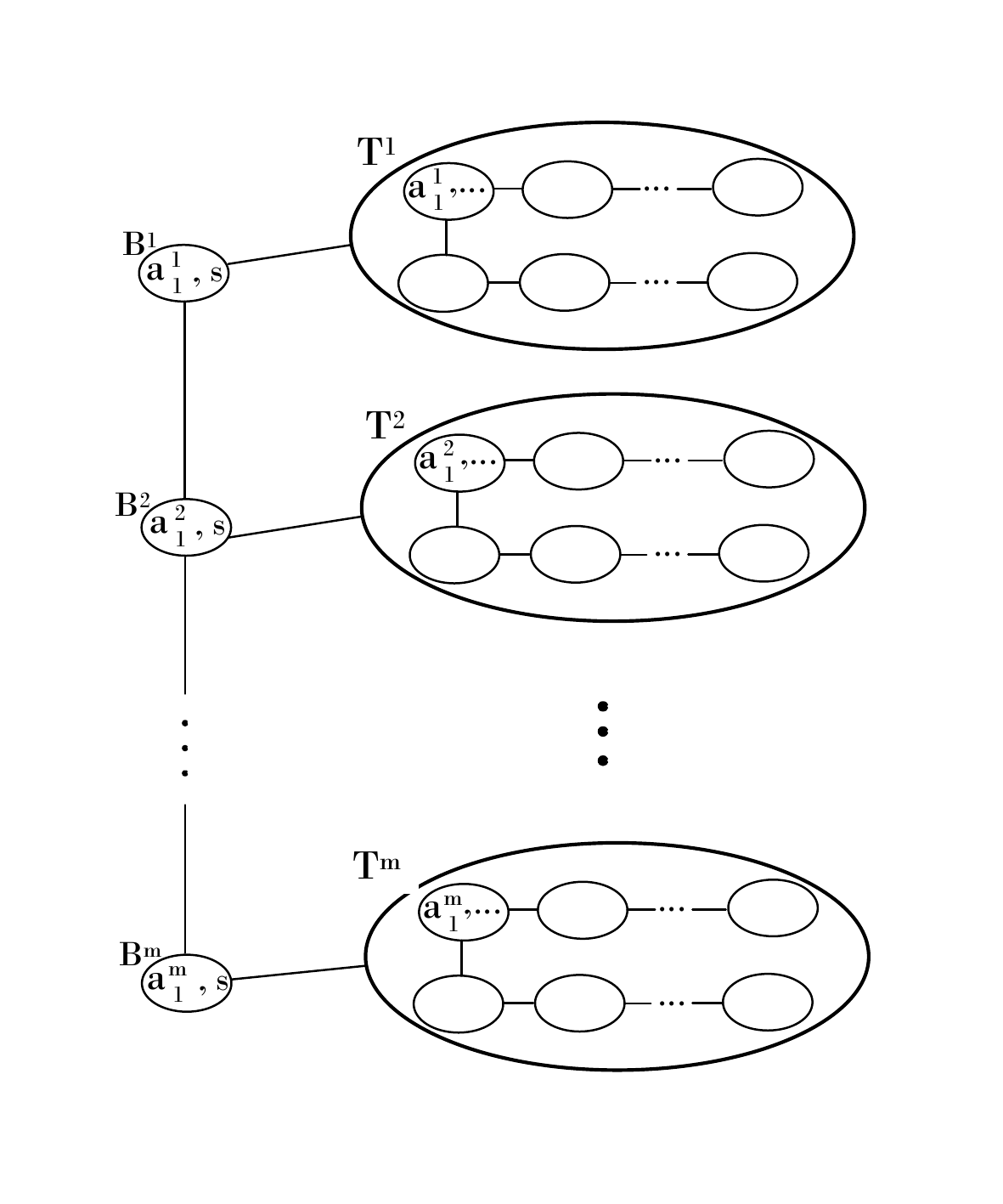}
\label{fig:figure2}
\end{minipage}
\caption{The metric space $G_b$ and a tree decomposition associated with it. The source $s$ is connected to the selected vertex $a^j_1$ of module $M^j$. \label{fig:GB}}
\end{figure}

\begin{lemma} \label{lemunlab}
The metric space $G_b$ has a tree decomposition of width $\alpha$.
\end{lemma}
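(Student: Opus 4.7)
The plan is to assemble a tree decomposition of $G_b$ by gluing together the tree decompositions of the individual modules given by Lemma \ref{mytd}, and then adding a small "spine" of bags that accommodates the source vertex $s$ without inflating the width.

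Concretely, for each module $M^i$ I would invoke Lemma \ref{mytd} to obtain a tree decomposition $T^i$ of $M^i$ of width $2\gamma = \alpha$. The crucial feature I would exploit is the structure of that decomposition: every bag of $T^i$ contains all of $U_1^i \cup U_2^i$, and in particular contains the selected vertex $a_1^i$. I would then build a global tree $T$ as follows. Create a root bag $X_0 = \{s\}$. For each $i \in \{1, \ldots, m\}$ create an auxiliary bag $Y_i = \{s, a_1^i\}$, make $Y_i$ a child of $X_0$, and attach $T^i$ as a subtree hanging from $Y_i$ (connecting $Y_i$ to any single bag of $T^i$; the choice is irrelevant since every bag of $T^i$ already contains $a_1^i$).

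I would then verify the three tree-decomposition axioms. Vertex coverage is immediate: the vertices of $M^i$ are covered by $T^i$, and $s$ is covered by $X_0$. For edge coverage, all intra-module edges are handled by $T^i$, while each source edge $(s, a_1^i)$ is contained in the bag $Y_i$. For the running intersection property, the bags containing $s$ form the star $\{X_0, Y_1, \ldots, Y_m\}$, which is connected; the bags containing $a_1^i$ form $\{Y_i\} \cup V(T^i)$ together with $X_0$ excluded, which is connected because $Y_i$ is attached directly to $T^i$; and every other vertex lies entirely in a single subtree $T^i$, where the property is inherited from Lemma \ref{mytd}.

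Finally, bag sizes are $|X_0| = 1$, $|Y_i| = 2$, and $|B| \leq 2\gamma + 1 = \alpha + 1$ for every bag $B$ of any $T^i$. The maximum bag size is therefore $\alpha + 1$, giving width exactly $\alpha$, as required. The only subtle point — and the one I would flag as the main obstacle — is ensuring that the vertex $s$ does not force any existing $\alpha$-width bag to grow: putting $s$ only into the small auxiliary bags $X_0$ and $Y_i$, rather than injecting it into bags of $T^i$, is precisely what keeps the width at $\alpha$ instead of $\alpha+1$.
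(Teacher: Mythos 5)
Your proposal is correct and is essentially the paper's own construction: take the width-$2\gamma=\alpha$ decompositions $T^i$ from Lemma \ref{mytd} and glue them using size-2 bags $\{s,a_1^i\}$, which keeps the source out of the large bags and preserves width $\alpha$. The only cosmetic difference is that you hang these small bags off an extra root bag $\{s\}$, whereas the paper simply links the bags $B^i=\{s,a_1^i\}$ to one another with $m-1$ edges; both yield the same width and satisfy the decomposition axioms.
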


\begin{proof}
By Lemma \ref{mytd}, each module has a tree decomposition of width $\alpha$. Let $T^i$ denote the tree associated with the decomposition of the $i$th module. For any tree $T^i$, consider a bag $B^i$ of 
size 2 which contains the source $s$ and the other endpoint of the edge between $s$ and $T^i$. Connect $B^i$ to an arbitrary bag of $T^i$. Add $m-1$ arbitrary edges between all $B^{i}$s to form a connected tree. Such a tree represents a valid tree decomposition of $G_b$ with width $\alpha$ \mbox{(see Figure \ref{fig:GB}).} \qed
\end{proof}

Since there are $m$ modules and in the $i$'th module $U_1^i$ contains $\gamma$ vertices, there are $m \times \gamma = k$ vertices in all of the $U_1^i$s. Assume that the $k$ servers are initially placed at separate nodes in the $U_1^i$s. A valid sequence for $G_b$ is defined by a sequence of rounds of requests in which each round has the following structure: \\

\footnotesize{\hspace{-.45cm}
$f(\pi_1^1, \ldots ,\pi_1^{m}) , (b_1^1, \ldots , b_1^{m}) , \ldots , (b_\gamma^1, \ldots , b_\gamma^{m}), f(\pi_2^1 , \ldots , \pi_2^{m}) ,
(a_1^1, \ldots , a_1^{m}), \ldots , (a_\gamma^1, \ldots , a_\gamma^{m})$}\\
\normalsize

Here, $f$ is a function that combines the requests from $m$ permutations. Let $(\pi^1, \ldots, \pi^m)$ denote $m$ permutations such that $\pi^i$ 
contains $\gamma$ requests $\SEQ{r^i_1, \ldots, r^i_\gamma}$ which defines a permutation in the module $M^i$. Thus, $f$ gives a sequence of length $ m \times \gamma$ starting with $m$ requests to $r^j_1$s, followed 
by $m$ requests to $r^j_2$s, and so on. For each $j$, ($1 \leq j \leq \gamma$) we have fixed orderings on the vertices such that $(a_{j}^{1}, \ldots, a_{j}^{m}) \in (U_{1}^{1}, \ldots, U_{1}^{m})$ and $(b_{j}^{1}, \ldots, b_{j}^{m}) \in (U_{2}^{1}, \ldots, U_{2}^{m})$. With this definition, when a valid sequence of $G_b$ is projected to the requests arising in a module $M$, the resulting subsequence is a valid sequence for $M$. 

\begin{lemma} \label{khast}
There is a unique optimal solution to serve a valid sequence on the metric space $G_b$. Also, each valid sequence requires a distinct advice string in order to be served optimally.
\end{lemma}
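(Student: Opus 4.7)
My plan is to prove the two assertions of the lemma separately: first, the uniqueness of the offline optimum for any valid sequence on $G_b$, and then the fact that two distinct valid sequences must be served by distinct advice strings.

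For uniqueness, I would start with a \emph{module-decoupling} step. The only vertex shared between modules is the source $s$, yet no request in a valid sequence is ever to $s$, and each module $M^i$ starts with exactly $\gamma$ servers occupying distinct vertices of $U_1^i$. An exchange argument then forces any optimal solution to keep the servers initially inside $M^i$ inside $M^i$ forever: any crossing of $s$ would incur two source-incident edges of cost that can always be saved by handing the remote work off to an idle local server of the destination module. Once decoupling is established, the optimum on $G_b$ decomposes into a sum of module-local optima, and the within-module analysis is a phase-by-phase adaptation of the argument of \cite{Bock11}. Inside each phase, the bipartite edge rule $u \notin Set(w)$ and the nesting property $Set(x_j) \subseteq Set(x_{j+1})$ of a valid sequence pin down a unique canonical assignment: working backward from the last request $x_{\gamma-1}$, whose only $U$-neighbour is $u_{\pi(\gamma)}$, the server at that vertex must be reserved; an induction then shows that $x_j$ must be served by the server at $u_{\pi(j+1)}$, and a pairing argument proves that any deviation strictly increases the cost.

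For the advice assertion I would argue by contradiction. Suppose distinct valid sequences $\sigma$ and $\sigma'$ share an advice string. They must disagree on at least one of the permutations that parametrize a phase; let $P^*$ be the earliest such phase, let $t_0$ be the index of the first request of $P^*$, and let $j^*$ be the first index at which the two permutations $\pi$ and $\pi'$ on $P^*$ disagree. All requests up to position $t_0 + j^* - 1$ are identical on $\sigma$ and $\sigma'$, because they are determined by earlier phases (which agree) and by $\pi(1), \ldots, \pi(j^*-1) = \pi'(1), \ldots, \pi'(j^*-1)$. However, by the canonical assignment derived above, the unique optimum on $\sigma$ serves the request at position $t_0 + j^* - 1$ (which is $x_{j^*-1}$) with the server at $u_{\pi(j^*)}$, whereas the unique optimum on $\sigma'$ must serve it with the server at $u_{\pi'(j^*)} \neq u_{\pi(j^*)}$. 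A deterministic online algorithm fed the same requests and the same advice up to this position cannot produce two different moves, giving the contradiction.

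The main obstacle is making the decoupling exchange argument tight: one has to rule out the possibility that routing a server through the source could enable a global saving elsewhere. I would handle this by defining an explicit identity-swap transformation that takes any inter-module schedule into a strictly cheaper intra-module one, swapping the role of the crossing server and a resident of the destination module and verifying, step by step, that the swap never creates a new cost anywhere. With that established, the per-module uniqueness is a direct adaptation of the argument of \cite{Bock11}, and the advice conclusion is the standard determinism argument at the correctly identified position within the common prefix.
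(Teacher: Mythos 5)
Your overall architecture matches the paper's: (i) show no optimal schedule ever moves a server between modules, (ii) invoke the within-module uniqueness of \cite{Bock11} on the projected valid subsequences, and (iii) derive the distinct-advice claim from determinism at a common-prefix request where the two unique optima move different servers. Part (iii) is essentially the paper's argument and is sound (up to a small indexing slip: because $f$ interleaves the $m$ modules, the requests of a phase are not contiguous, so you should pick the module containing the \emph{first} request where $\sigma$ and $\sigma'$ differ rather than doing position arithmetic of the form $t_0+j^*-1$; the previous request of that same unit graph then lies in the common prefix and the two unique optima serve it with different servers, exactly as in the paper).

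The genuine gap is in step (i), which is the actual new content of this lemma, and which you yourself flag as the ``main obstacle'' without resolving it. Your proposed identity-swap rests on handing the crossing server's work to ``an idle local server of the destination module,'' but there are no idle servers: in every round each module receives $4\gamma$ requests to distinct vertices and all $\gamma$ resident servers are fully engaged (each serves four requests at cost one under \sol). A naive identity swap also fails on positions: the crossing server enters near the source while the resident it would impersonate sits elsewhere, so the swapped schedule does not have the same cost term by term, and ruling out a compensating global saving is precisely what remains to be proved. The paper closes this with a quantitative per-round accounting against \sol's benchmark of $4\gamma$ per module per round: a module holding $\gamma+p$ servers can save at most $p$ per round (at most $p$ requests can be served at zero cost, since all requested vertices in a round are distinct, and skipping a move that \sol makes forces some other server to cover two requests of the same layer at doubled cost), while a module left with $\gamma-q$ servers must pay at least $8q$ extra per round (some server must serve two requests arising in the same layer, e.g.\ two requests in $W_1$, paying $2$ instead of $1$); hence transferring $x$ servers gains at most $x$ but loses at least $8x$ per round, plus the transfer cost itself, which gives the strict suboptimality needed for uniqueness. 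Without this (or an equally explicit exchange bound), your decoupling claim, and with it the uniqueness statement, is asserted rather than proved.
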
 

\begin{proof}
 We present an algorithm \sol and show that its solution is the unique optimal solution for serving any valid sequence. To serve a request in $W_1^i$ (resp. $W_2^i$), \sol moves a server from $U_1^i$ (resp. $U_2^i$) to act according to the corresponding permutation $\pi_1^i$ (resp. $\pi_2^i$). To be precise, to serve a request $x_i$ (for $0 \leq i \leq \gamma-2$) it moves the server positioned at $Set(x_{i+1}) \setminus Set(x_{i})$, and thus leaving a unique choice for the last request $x_{\gamma-1}$. To serve a request $a_t^i$ in $U_1^i$ (resp. $U_2^i$), \sol moves the single server which is located at an adjacent node in $W_2^i$ (resp. $W_1^i$). Therefore, \sol incurs a cost of one for each request. 

There are $\gamma$ servers initially located in each $\gamma$ module, and \sol never moves a server from one module to another (no server passes the common source). To show that the solution of \sol is the unique optimal for serving any valid sequence, it is sufficient to show the following conditions: 

\begin{itemize}
\item No optimal algorithm moves a server from one module to another. 
\item Among all algorithms which do not move a server between modules, \sol provides the unique optimal solution.
\end{itemize}

Assume that there is an optimal algorithm \A which moves a server from one module to another. So at some point after serving the $t$'th request, there is a $\gamma$-module $M$ which has $\gamma + p$ servers stationed on it, for some $p \geq 1$. We show that the cost incurred by \A for serving the requests of $M$ in each round after the $t$'th request is lower bounded by $4\gamma - p$. Note that \sol incurs a total cost of $4\gamma$ in each round for the requests of any module. Assume that the cost incurred by  \A for serving a round in $M$ is strictly less than $4\gamma-p$. This implies that strictly more than $p$ requests incur no cost in that round. Since the same vertex is not requested twice in the same round, more than $p$ servers must not be moved in that round. So there is at least one server $s$ which is moved by \sol and not by \A. We show that moving $s$ in the same way as \sol does decreases the cost for \A. Assume that \A keeps $s$ at some vertex in $U_1$ of $M$. Thus, \sol moves $s$ from $U_1$ to serve a request in $W_1$, and then moves it to serve some request in $U_2$, followed by a move to serve a request in $W_2$, and finally a move to serve a request in $U_1$. Each of these moves cost one for \sol. However, each of the involved requests imposes a cost of 2 for \A since it has to use some server to serve at least two requests arising in $W_1$ in that round, thus, requiring a move via some vertex in $U_1$ or $U_2$. The same holds for the requests in $U_1$, $U_2$ and $W_2$. We can make similar arguments when \A keeps $s$ at some vertex in $W_1$ or $U_2$ of $M$, and conclude that \A saves a cost of $x$ by not moving $s$ in a round while incurring a cost of $2x$ in the remainder of the round. Hence, \A must incur a cost of at least $4\gamma-p$ for requests arising in $M$.

Let $M'$ be a $\gamma$ module in which \A has stationed $\gamma - q, ~(q \geq 1)$ servers after serving the $t$'th request. We show that the cost incurred by \A to serve the requests from $M'$ in each round starting after the $t$'th request is lower bounded by $4 \gamma + 8q$. Similar to the previous argument, since at least one server is missing, for some request(s) arising in $W_1$, \A has to use server(s) already located in $W_1$. So instead of incurring a cost of 1 as \sol, \A incurs a cost of 2 for each of those requests (the same holds for requests in $U_1$,$U_2$, and $W_2$).

To conclude, if \A moves $x$ servers between modules, compared to \sol, it saves at most  $x$ units of cost on the requests arising in the modules which receive extra servers, while it has to pay at least an extra $8x$ units for the requests in modules which lose their servers. This is in addition to the cost involved in moving servers from one module to another. We conclude that an optimal algorithm never moves servers between modules -condition 1.  

Inside each module, \sol acts the same as the unique optimal algorithm presented in \cite{Bock11}. Recall that the requests projected to each module form a valid sequence for that module, and can be treated independently (since servers do not move between modules in an optimal scheme). Hence, both conditions 1 and 2 are met, and \sol is the unique optimal for 
serving any valid request in $G_b$. 

Next, we show that each valid sequence requires a distinct advice string. Assume that two valid sequences $\sigma$ and $\sigma'$ differ for the first time at the $t$'th request. Note that two valid sequences of $G_b$ can only differ on the requests which define the permutations. Hence, $t$ should be a request belonging to $\pi^i_1$ or $\pi^i_2$ of some module $M^i$, i.e, one of the permutations representing a valid subsequence for the unit graphs defining $M^i$. Let $t_{0}<t$ denote the index of the previous request to an item in the same unit graph (that is, the previous request in the same permutation). While serving the request indexed $t_{0}$ in the two sequences,  an optimal algorithm will move different servers in anticipation of the $t$'th request. Hence an online algorithm should receive different advice strings to perform optimally for both sequences. \qed 
\end{proof}

To find a lower bound for the length of the advice string, we count the number of distinct valid sequences for the metric space $G_b$.
In each round there are $(\gamma !)^2$ valid sequences for each $\gamma$-module. Since there are $m$ such modules, there are $(\gamma !) ^ {2m}$ possibilities for each round. A valid sequence of length $n$ involves $n/(4\gamma m)$ rounds; hence there are $(\gamma !)^ {n/(2\gamma)}$ valid sequences of length $n$. Each of these sequences need a distinct advice string. Hence, at least $\log ((\gamma !)^{n/(2 \gamma)}) \geq (n/2) \log (\gamma /e) =   (n/2) \log (\alpha / (2e))$ bits of advice are required to serve a valid sequence optimally. This proves the following theorem.

\begin{theorem}\label{mainLowTreeWidth}
Consider the \ks problem on a metric space of treewidth $\alpha$, such that $4 \leq \alpha < 2k$. At least $\frac{n}{2}(\log \alpha- 1.22)$ bits of advice are required to optimally serve an input sequence of length $n$.
\end{theorem}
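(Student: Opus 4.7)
The plan is to combine the previously established lemmas about the metric space $G_b$ with a simple counting argument followed by Stirling's approximation. All the difficult structural work has been done in Lemmas \ref{mytd}, \ref{lemunlab}, and \ref{khast}, so what remains is to extract the bit count.

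First I would invoke Lemma \ref{lemunlab} to fix $G_b$ as an admissible metric space: with the parameters $\gamma = \alpha/2$ and $k = m\gamma$ for some integer $m \geq 1$, the graph has a tree decomposition of width $\alpha$, and the condition $\alpha < 2k$ is exactly what guarantees $m \geq 2$ (so that more than one module exists). Then, by Lemma \ref{khast}, the optimal solution on $G_b$ is unique for every valid sequence, and any two valid sequences diverge in the action taken by \opt at the first point of disagreement. Consequently every valid sequence requires its own advice string, and the advice-length lower bound is $\lceil \log_2 (\text{number of valid sequences}) \rceil$.

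Next I would count the valid sequences. A single round comprises two permutations per module (one for the $G_1$ side, one for the $G_2$ side), for a total of $2m$ permutations, each of length $\gamma$, yielding $(\gamma!)^{2m}$ round variants. Since each round contributes $4m\gamma$ requests, a sequence of length $n$ contains $n/(4m\gamma)$ rounds, and so the number of distinct valid sequences is
\[
\bigl((\gamma!)^{2m}\bigr)^{n/(4m\gamma)} \;=\; (\gamma!)^{\,n/(2\gamma)}.
\]

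Finally I would apply Stirling's inequality $\log_2(\gamma!) \geq \gamma \log_2(\gamma/e)$ to obtain
\[
\log_2\!\bigl((\gamma!)^{n/(2\gamma)}\bigr) \;=\; \frac{n}{2\gamma}\,\log_2(\gamma!) \;\geq\; \frac{n}{2}\,\log_2\!\left(\frac{\gamma}{e}\right) \;=\; \frac{n}{2}\,\log_2\!\left(\frac{\alpha}{2e}\right),
\]
and simplify the constant to recover the claimed bound $\frac{n}{2}(\log \alpha - 1.22)$. There is essentially no obstacle beyond checking that the counting correctly accounts for the number of permutations per round and that Stirling is applied in the right direction; the real technical content—that distinct valid sequences really do force distinct advice, and that $G_b$ really does have treewidth $\alpha$—is entirely encapsulated in the lemmas already proved.
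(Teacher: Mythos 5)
Your proposal is correct and follows essentially the same route as the paper: fix $G_b$ with $\gamma=\alpha/2$ (Lemmas \ref{lemunlab} and \ref{khast}), count $(\gamma!)^{2m}$ round variants and $n/(4m\gamma)$ rounds to get $(\gamma!)^{n/(2\gamma)}$ valid sequences each needing a distinct advice string, and apply $\log(\gamma!)\geq\gamma\log(\gamma/e)$ to obtain $\frac{n}{2}\log\frac{\alpha}{2e}$. The only caveat, which you inherit from the paper itself, is the final numerical simplification from $\frac{n}{2}\log\frac{\alpha}{2e}$ to the stated constant, so there is nothing further to add.
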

\section{Upper Bounds}

\subsection{Graphs with Small Treewidth}
We introduce an algorithm called \textit{Graph-Path-Cover}, denoted by \npc, to show that $\oh{n (\log \alpha + \log \log N)}$ bits of advice are sufficient to optimally serve a sequence of length $n$ on any metric space of treewidth $\alpha$. We start with the following essential lemma. 

\begin{lemma} \label{lemmaMMT} Let $T$ be a tree decomposition of a graph $G$. Also, let $x$ and $y$ be two nodes of $G$ and $P = (x=p_0,p_1, \dots p_{l-1}, y=p_l)$ be the shortest path between $x$ and $y$. Let $X$ and $Y$ be two bags in $T$ which respectively contain $x$ and $y$. Any bag on the unique path between $X$ and $Y$ in $T$ contains at least one node $p_i$ $(0\leq i\leq l$) from $P$.
\end{lemma}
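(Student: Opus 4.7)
The plan is to prove this via the classical ``subtree property'' of tree decompositions, which is implicit in property (3) of the definition, combined with property (2) applied to the edges of $P$.

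First, I would establish the well-known fact that for every vertex $v \in V(G)$, the set $T_v$ of bags containing $v$ induces a connected subtree of $T$. This is an immediate consequence of the third tree-decomposition axiom: if two bags $X_i, X_k$ both contain $v$, then every bag $X_j$ on the path between them satisfies $v \in X_i \cap X_k \subseteq X_j$. Applying this to each vertex $p_i$ of the path $P$, I obtain subtrees $T_{p_0}, T_{p_1}, \ldots, T_{p_l}$ of $T$.

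Next, I would use the second axiom: for each edge $(p_i, p_{i+1}) \in E(G)$ along $P$, there is a bag $B_i$ of $T$ that contains both endpoints. Hence $B_i \in T_{p_i} \cap T_{p_{i+1}}$, so consecutive subtrees overlap. Since a union of pairwise-overlapping subtrees arranged in a chain is itself a connected subgraph of a tree, $\bigcup_{i=0}^{l} T_{p_i}$ is a subtree of $T$. This subtree contains $X$ (as $x = p_0 \in X$, so $X \in T_{p_0}$) and $Y$ (similarly $Y \in T_{p_l}$), and therefore it must contain the entire unique path from $X$ to $Y$ in $T$.

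Finally, I would conclude: any bag $Z$ on the $X$--$Y$ path lies in $\bigcup_i T_{p_i}$, i.e., $Z \in T_{p_i}$ for some $i$, which is exactly the statement that $Z$ contains the vertex $p_i$ of $P$. There is no serious obstacle here; the only subtlety is making the subtree property explicit (the definition given in the paper only states the running-intersection property in its raw form), and being careful that the chain of overlaps yields a connected union inside the tree $T$, which is why the argument cannot be shortcut without invoking both axioms (2) and (3).
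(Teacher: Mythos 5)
Your proof is correct and follows essentially the same route as the paper's: both establish the connected-subtree property for each vertex of $P$, use the edge axiom to make consecutive subtrees overlap, and conclude that the union is a connected subtree containing $X$ and $Y$, hence the whole $X$--$Y$ path in $T$. Your write-up is, if anything, slightly more explicit about why the running-intersection axiom yields the subtree property, but the argument is the same.
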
 

\begin{proof}
By the definition of the tree decomposition, each vertex $v$ of $G$ is listed in the bags of a contiguous subtree $T_v$ of $T$. Consider two vertices $p_i$ and $p_{i+1}$ in $P$. Since $p_i$ and $p_{i+1}$ are neighbors, there is a bag in $T$ which contains both of them. So the union of the subtrees $T_{p_i}$ and $T_{p_{i+1}}$ forms a (connected) subtree of $T$. Similarly, the union of all the subtrees of the nodes $p_0, \ldots, p_l$ form a (connected) subtree in $T$. Such a subtree contain $X$ and $Y$ and hence, any bag on the path between them. So any bag between $X$ and $Y$ contain at least one vertex $p_i$ of $P$.\qed 
\end{proof}

Similar to the \pcv algorithm introduced for trees in \cite{WAOA11}, \npc moves its servers on the same trajectories as \opt moves its. Suppose that \OPT uses a server $s_i$ to serve the requests $\left[ r_{a_{i,1}} , \ldots, r_{a_{i,n_i}}\right]$ ($i\leq k, n_i \leq n$). So, $s_i$ is moved on the unique path from its initial position to $r_{a_{i,1}}$, and then from $r_{a_{i,1}}$ to $r_{a_{i,2}}$, and so on. Algorithm \pcv tends to move $s_i$ on the same path as \opt.

For any node $v$ in $G$, \npc treats one of the bags  which contains $v$ as the \textit{representative bag} of $v$. Moreover, it assumes an ordering of the the nodes in each bag. Each node in $G$ is \textit{addressed} via 
its representative bag, and its index among the nodes of that bag. A server $s_i$, located at a vertex $v$ of $G$, is addressed via a bag which contains $v$ (not necessarily the representative bag of $v$) and the index of $v$ in that bag. Note that while there may be a unique way to address a node, there might be several different ways to address a server.

Assume that for serving a request $y$, \opt moves a server $s_i$ from a node $x$ to $y$ in $G$. Let $X$ and $Y$ be respectively the representative bags of $x$ and $y$,
and $Z$ be the least common ancestor of $X$ and $Y$ in $T$.  By Lemma \ref{lemmaMMT}, the shortest path between $x$ and $y$ passes at least one node $z$ in $Z$,  and 
that node can be indicated by $\lceil \log h \rceil + \lceil \log \alpha \rceil$ bits of advice ( $h$ denotes the height of the tree associated with the tree 
decomposition), with $\lceil \log h \rceil$ bits indicating $Z$ and $\lceil \log \alpha \rceil$ bits indicating the index of the said node $z$ in $Z$. After serving $x$, \npc moves $s_i$ to $z$, provided that the address of $z$ is given as part of the advice for $x$. For serving $y$, \npc moves $s_i$ to $y$, provided that the address of $s_i$ (address of $z$) is given as part of the advice for $y$. In what follows, we elaborate this formally.

Before starting to serve an input sequence, \npc moves each server $s_i$ from its initial position $x_0$ to a node $z_0$ on the shortest path between $x_0$ and the first node $r_{a_{i,1}}$ served by $s_i$ in \opt's scheme. \npc selects $z_0$ in a way that it will be among the vertices in the least common ancestor of the representative bags of $x_0$ and $r_{a_{i,1}}$ in the tree decomposition (by Lemma \ref{lemmaMMT} such a $z_0$ exists). To move all servers as described, \npc reads $(\lceil \log h \rceil + \lceil \log \alpha \rceil) \times k$ bits of advice. After these \textit{initial} moves, \npc moves servers on the same trajectories of \opt as argued earlier. 
Assume that $x$, $y$ and $w$ denote three requests which are consecutively served by $s_i$ in \opt's scheme. The advice for serving $x$ contains $\lceil \log h \rceil + \lceil \log \alpha \rceil$ bits which represents a node $z_1$,  which lies on the shortest path between $x$ and $y$ and is situated inside the least common ancestor of the respective bags in $T$. \npc moves $s_i$ to $z_1$ after serving $x$. The first part of advice for $y$ contains $\lceil \log h \rceil + \lceil \log \alpha \rceil$ bits indicating the node $z_1$ from which $s_i$ is moved to serve $y$. The second part of advice for $y$ indicates a node $z_2$ on the shortest path between $y$ and $w$ in the least common ancestor of their bags in $T$. This way, $2 (\lceil \log h \rceil + \lceil \log \alpha \rceil)$ bits of advice per request are sufficient to move servers on the same trajectories as \opt. 

The above argument implies that an advice of size  $2 (\lceil \log h \rceil + \lceil \log \alpha \rceil) \times n + (\lceil \log h \rceil + \lceil \log \alpha \rceil) \times k$ is sufficient to achieve an optimal algorithm. The value of $h$ (the height of the tree decomposition) can be as large as $N$, 
however we can apply the following lemma to obtain height-restricted tree decompositions.

\begin{lemma} \cite{Bodlaender93atourist,ICALP11} \label{lem:depthtc}
Given a tree decomposition with treewidth $\alpha$ for a graph $G$ with $N$ vertices, one can obtain a tree decomposition of $G$ with height $\oh {\log N}$ and width at most $3\alpha+2$.
\end{lemma}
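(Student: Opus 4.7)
The plan is to apply the centroid-based rebalancing technique that is classical for tree decompositions. First I would normalize the given decomposition $T$ into a \emph{small} one with $O(N)$ bags by contracting any bag that is a subset of a neighbouring bag; this preserves the width $\alpha$ and ensures that the centroid argument works on a tree whose size is linear in $N$. The rebalanced decomposition $T'$ is then produced recursively: at each level, I locate a \emph{centroid bag} $X_c$ of the current subtree of $T$, i.e.\ a bag whose removal splits the tree into components each containing at most $2/3$ of the bags. I put $X_c$ at the root of the corresponding fragment of $T'$ and recurse on each component, attaching the returned balanced decompositions as children of $X_c$.

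The depth satisfies the recurrence $D(N)\le D(2N/3)+1$, which gives the height bound $D(N)=O(\log N)$ immediately. The subtle point is the width. To guarantee that rerooting preserves the three axioms of a tree decomposition---vertex coverage, edge coverage and, crucially, the contiguity condition for every vertex---one has to carry an \emph{interface} of vertices from higher levels into the lower recursive calls. Specifically, when recursing into a component $C$ of $T-X_c$, the vertices of $X_c$ that also occur in bags inside $C$ must still live in a connected portion of $T'$ containing both $X_c$ and their occurrences in $C$, otherwise contiguity is broken.

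A naive implementation of this carry-along by adding the interface to every bag of the recursion would blow the width up to $\Theta(\alpha\log N)$, and this is where the main obstacle lies. The trick, as in Bodlaender's construction and its cleaner ICALP presentation, is to insert the interface only into a small set of ``glue'' bags connecting $X_c$ to the next-level centroids, rather than into every bag below. A careful inductive argument then shows that any single bag of $T'$ inherits at most two such interfaces of size $\le \alpha+1$ on top of its own $\alpha+1$ original vertices, yielding the stated width bound of $3\alpha+2$. The routine parts are the existence of centroid bags (pigeonhole on subtree sizes) and the correctness of the recursion; the part I expect to be most delicate is verifying that, after several nested reroutings, the contiguity axiom still holds for every vertex of $G$, since interface vertices have to be tracked simultaneously across branches of the recursion. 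Once that invariant is in place, the width and depth bounds drop out of the recurrence.
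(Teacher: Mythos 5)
The paper does not actually prove this lemma: it is imported by citation (Bodlaender's survey and the ICALP reference), so there is no in-paper argument to compare yours against. Your plan is the same strategy those sources use---normalize to $O(N)$ bags, recursively split at a centroid bag, and control the width by passing a bounded ``interface'' of boundary bags down the recursion---so in spirit you are reconstructing the cited proof rather than finding a different route.

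As a standalone proof, however, the write-up stops exactly at the step that carries all the difficulty, and at present that is a genuine gap. The assertion that ``any single bag of $T'$ inherits at most two interfaces of size $\le \alpha+1$'' is precisely the inductive invariant one must formulate and prove; it does not follow from the centroid choice you describe. The standard way to close it is to make the recursion take as input a subtree of the original decomposition together with \emph{at most two} marked boundary bags, output a decomposition whose root bag is the union of those marked bags with one chosen splitting bag (hence at most $3(\alpha+1)$ vertices, i.e.\ width $3\alpha+2$), and choose the splitting bag not merely as a size centroid but so that every component of the subtree minus that bag contains at most one of the current marked bags (e.g.\ split on the path between the two marked bags, using a weighted balance argument); each component is then recursed on with the splitting bag plus at most one old marked bag as its new boundary. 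Without this coupling between the balance condition and the positions of the boundary bags, the ``glue bag'' idea as you state it can accumulate interfaces across levels and degrade to width $\Theta(\alpha\log N)$, which is exactly the failure mode you were trying to avoid. Two smaller points: the recurrence $D(m)\le D(2m/3)+1$ is over the number of \emph{bags}, so you need the normalization to $O(N)$ bags (which you do mention) before concluding height $O(\log N)$; and the contiguity check you flag as delicate becomes routine once the invariant above is in place, since every vertex of a marked bag is explicitly present in the root bag of the corresponding recursive call.
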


If we apply \npc on a height-restricted tree decomposition, we get the following theorem.

\begin{theorem}\label{UpperBoundTreeWidth}
 For any metric space of size $N$ and treewidth $\alpha$, there is an online algorithm which optimally serves any input sequence of size $n$, provided with $\oh{n (\log \alpha + \log \log N)}$ bits of advice.
\end{theorem}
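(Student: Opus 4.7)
My plan is to combine the \npc algorithm described in the preceding paragraphs with Lemma~\ref{lem:depthtc}, and then account for the advice budget. First, given a metric space $G$ of size $N$ with treewidth $\alpha$, I would fix any tree decomposition of width $\alpha$ and apply Lemma~\ref{lem:depthtc} to produce a tree decomposition $T$ of $G$ of width at most $3\alpha+2$ and height $h = O(\log N)$. All addressing used by the algorithm, that is, representative bags, in-bag indices, and least common ancestors, is carried out with respect to $T$.

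Second, I would run \npc on $T$. Correctness is essentially already done: the discussion before the theorem explains that, for every move of \opt from $x$ to $y$, Lemma~\ref{lemmaMMT} supplies a vertex $z$ on the shortest $x$-$y$ path that lies in the least common ancestor bag of the representative bags of $x$ and $y$, so the advice can specify $z$ and guarantee that \npc reproduces \opt's trajectories exactly. Consequently the cost of \npc on any input equals the cost of \opt on that input. The remaining task is purely a counting step: per request, two pieces of information are read (the bag of $z$ and its index in that bag, together with the address of the server to be moved), each of size $\lceil \log h\rceil + \lceil \log(3\alpha+3)\rceil$. Substituting $h = O(\log N)$ and $3\alpha+3 = O(\alpha)$, this gives $O(\log\alpha + \log\log N)$ bits per request, hence $O(n(\log\alpha + \log\log N))$ over the whole sequence.

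Finally, I would dispatch the initial setup: moving each of the $k$ servers from its start position to a first ``staging'' vertex in the appropriate least common ancestor bag costs $O(k(\log\alpha + \log\log N))$ further bits, which is absorbed into the stated bound in the usual regime $k \leq n$ (and otherwise the claim is trivial since one can just encode the server indices directly). I do not expect any conceptual obstacle here; the only thing that requires care is the pedestrian handling of the ceiling functions and of the small edge cases $\alpha = O(1)$ or $h = O(1)$, so that $\log(3\alpha+3)$ is safely absorbed into $\log \alpha$ and $\lceil \log h\rceil$ into $\log\log N$. Once that is done, summing per-request and initialization contributions yields precisely the $O(n(\log\alpha + \log\log N))$ advice bound claimed by the theorem.
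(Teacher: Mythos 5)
Your proposal is correct and follows essentially the same route as the paper: run \npc on a height-restricted tree decomposition obtained from Lemma~\ref{lem:depthtc}, use Lemma~\ref{lemmaMMT} to justify that the per-request advice of $O(\log\alpha + \log\log N)$ bits (bag plus in-bag index, twice per request) lets the algorithm retrace \opt's trajectories, and absorb the $O(k(\log\alpha+\log\log N))$ initialization cost. Your explicit handling of the width increase to $3\alpha+2$ is a small point the paper glosses over, but the argument is the same.
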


\subsection{Graphs with Small Number of Collective Tree Spanners}

In this section we introduce an algorithm which receives an advice of almost linear size and achieves constant competitive ratio for a large family of graphs. 


\begin{theorem}\label{thspan}
If a metric space of size $N$ admits a system of $\mu$ collective tree $(q,r)$-spanners, then there is a deterministic online algorithm which on receiving 
$ \oh{n \left(\log \mu + \log\log N\right)}$ bits of advice, achieves a competitive ratio of at most $q+r$ on any sequence of length $n$.
\end{theorem}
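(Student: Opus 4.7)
The plan is to mimic \opt's server-to-request assignment while routing each individual move along the best tree in the collective spanner system $\Tau(G)$, reusing the \npc machinery of Theorem \ref{UpperBoundTreeWidth} on a per-tree basis. Fix an optimal offline schedule \opt; for each request $r_t$ it moves some server $s_{i_t}$ from $x_t$ to $y_t=r_t$, paying $d_G(x_t,y_t)$. By the defining property of $\Tau(G)$, there is an index $j_t\in\{1,\dots,\mu\}$ with $d_{T_{j_t}}(x_t,y_t)\le q\,d_G(x_t,y_t)+r$. Our algorithm would assign the same server $s_{i_t}$ to $r_t$ but would move it along the unique $x_t$-to-$y_t$ path in $T_{j_t}$. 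Because both \opt and the algorithm leave $s_{i_t}$ at $y_t$ after serving $r_t$, their server configurations coincide at every step, so the algorithm is free to pick a different tree for the next move without ever falling out of sync with \opt.

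For the encoding, I would first precompute, for every $T_j\in\Tau(G)$, a tree decomposition of $T_j$ via Lemma \ref{lem:depthtc}; since $T_j$ is a tree this gives width $O(1)$ and height $O(\log N)$. The per-request advice then has two parts: $\lceil\log\mu\rceil$ bits naming $j_t$, and $O(\log\log N)$ bits applying the \npc parking-and-addressing scheme of Theorem \ref{UpperBoundTreeWidth} inside the decomposition of $T_{j_t}$. Concretely, after serving the previous request, the algorithm parks $s_{i_t}$ at a vertex $z$ on the $T_{j_t}$-path from $x_t$ to $y_t$ that lies in the LCA bag of the representative bags of $x_t$ and $y_t$; such a $z$ exists by Lemma \ref{lemmaMMT} applied inside $T_{j_t}$, and the address of $z$ (bag index plus position within the bag) fits in $O(\log\log N)$ bits. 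Summing over $n$ requests, together with an $O(k\log\log N)$ overhead to bootstrap servers into their first parking positions, delivers the claimed $O(n(\log\mu+\log\log N))$ total.

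For the cost bound, the algorithm pays $\sum_t d_{T_{j_t}}(x_t,y_t) \le q\sum_t d_G(x_t,y_t) + rM = q\,\opt + rM$, where $M$ is the number of nontrivial moves of \opt. Under the standard normalization that the minimum positive edge weight is at least $1$, each \opt-move costs at least $1$, so $M\le\opt$ and the total cost is at most $(q+r)\,\opt$.

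The main obstacle I expect is justifying that \npc's parking-and-addressing trick remains well-defined even though the routing tree --- and hence the decomposition used for parking --- changes from one request to the next. The saving grace is exactly the configuration-matching invariant from the first paragraph: since our servers always sit where \opt's do, each parked $s_{i_t}$ is uniquely pinned down by the advice-supplied address of $z$ inside the decomposition of $T_{j_t}$ alone, and we never need to refer back to the tree used for the previous move. Once that point is in place, the rest is a direct transcription of the analysis of Theorem \ref{UpperBoundTreeWidth} to the per-tree setting.
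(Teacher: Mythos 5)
Your scheme is, in substance, the paper's: for every single move of \opt you pick the spanner tree that best approximates that move, split the corresponding tree path at a parking vertex, and spend roughly $\log \mu + \log\log N$ bits per leg to name the tree and the address. The one genuine difference is the addressing machinery: the paper runs the \pcv/caterpillar-decomposition idea of \cite{WAOA11} directly on each rooted spanner tree, whereas you apply the \npc machinery (Lemma~\ref{lem:depthtc} plus Lemma~\ref{lemmaMMT}) to a height-restricted tree decomposition of each $T_j$; since a tree has treewidth $1$, the width is $O(1)$ and the height $O(\log N)$, so this also yields $O(\log\log N)$-bit addresses --- provided the LCA bag is addressed by its depth above the representative bag of the \emph{known} endpoint, not by a global bag index (with $\Theta(N)$ bags a global index would cost $\Omega(\log N)$ bits, so your phrase ``bag index'' must be read in the relative sense used by \npc). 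Your explicit handling of the additive term $r$ (charging $rM \leq r \cdot \opt$ because each nontrivial \opt move costs at least $1$) is, if anything, more careful than the paper's one-line claim that the stretch is a multiplicative $q+r$.

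Two points need repair, though neither is fatal. First, your ``configuration-matching invariant'' is false as stated: because you park $s_{i_t}$ at the intermediate vertex $z$ after it serves a request, your configuration does \emph{not} coincide with \opt's between consecutive uses of a server, and your last paragraph leans on exactly this false statement to justify well-definedness. The correct justification is simpler and is what the paper encodes via its server labels: the advice is built offline from \opt's schedule, so the address written when $s_{i_t}$ is parked (a vertex of $G$, interpreted in the decomposition of $T_{j_t}$) is precisely the address read when $s_{i_t}$ is next needed; by induction a server is sitting there, and if several servers share that vertex they are interchangeable. Second, each request in fact needs the tree index twice --- once to interpret the address of the server being retrieved, and once to name the tree along which it is re-parked for its next use --- i.e.\ $2\lceil\log\mu\rceil + O(\log\log N)$ bits per request, as in the paper; this changes only the constant, not the claimed $O\left(n(\log\mu+\log\log N)\right)$ total.
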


\begin{proof} 
When there is only one tree $T$ in the collection (i.e., $\mu = 1$), we can apply the \pc algorithm of \cite{WAOA11} on $T$ to obtain the desired result. To be precise, 
for the optimal algorithm \opt, we denote the path taken by it to serve a sequence of requests with the server $s_i$ to be $P_G= \left[  x_{a_{i,1}},x_{a_{i,2}}, \ldots, 
x_{a_{i,n_i}} \right]$.


\pc algorithm moves $s_i$ on the path $P_T = \left[ x_{a_{i,1}}, \ldots,x_{a_{i,2}}, \ldots, x_{a_{i,n_i}} \right]$ in $T$. Since 
$T$ is a spanner of $G$, the total length of $P_T$ does not exceed that of $P_G$ by more than a factor of $q+r$ for each edge in $P_G$, and consequently the cost of 
the algorithm is at most $q+r$ times that of {\opt}'s. Thus, the algorithm is $(q+r)$-competitive. 

After serving a request $x$ with server $s_i$, \pc can move $s_i$ to the least common ancestor of $x$ and $y$, where $y$ is the next request at which \opt uses $s_{i}$. This 
requires $\lceil \log h \rceil$ bits of advice per request ($h$ being the height of the tree). Instead, the algorithm can use the caterpillar decomposition of $T$ 
and move servers on the same set of paths while using only $\oh{\log\log N}$ bits of advice. The main idea is the same, whether we use a rooted tree or the caterpillar decomposition. Here for the ease of explanation, we will only argue for the rooted tree, but the 
statement of the theorem holds when the caterpillar decomposition is used. 

We introduce an algorithm that mimics  \opt's moves for each server,  by picking suitable trees from the collection to move the server through.The advice provided with each request indicates which tree from the collection would best approximate the 
edges traversed by the server in \OPT's scheme to reach the next node at which it is used. To this end, we look at the tree spanners as rooted trees. If \opt moves a server $s_i$ on the path $P_G = \left[  x_{a_{i,1}},x_{a_{i,2}}, \ldots, x{_{i,n_i}} \right]$, then for each edge $(x_{a_{i,j}},x_{a_{i,j+1}})$ on this path, our algorithm moves $s_i$ on the shortest path of (one of) the tree spanners which best 
approximates the distance between the vertices $x_{a_{i,j}}$ and $x_{a_{i,j+1}}$.  As explained below the selection of suitable spanners at every step can be ensured by providing $2\ceil{\log \mu}$ bits of advice with each request.

Let us denote the initial location of the $k$ servers by $z_{1}, \ldots, z_{k}$, and let $z'_{1}, \ldots, z'_{k}$ respectively denote the first requested nodes 
served by them. Before starting to serve the sequence, for any server $s_i$, the algorithm reads $\lceil \log \mu \rceil+ \lceil \log h \rceil$ bits of advice to detect the 
tree $T_p (1 \leq p \leq \mu)$ that preserves the distance between $z_i$ and $z'_i$ in $G$, and moves $s_i$ to the least common ancestor of $z_i$ and $z'_i$. 
Moreover, the algorithm labels $s_i$ with index $p$. These labels are used to move the correct servers on the trees in order to cover the same paths as \opt. Let $w$ and 
$y$ be two vertices which are served respectively before and after $x$ with the same server in \opt's scheme. To serve the request to $x$ the algorithm works as 
follows:

\begin{itemize}
	\item Find the spanner $T_{p}$ which best approximates the length of the shortest path between $w$ and $x$ in $G$. This can be done if provided with 
	$\lceil \log \mu \rceil$ bits of advice with $x$.
	\item Read $\lceil \log h \rceil$ bits of advice to locate a server $s$ labeled as $p$ on the path between node $x$ and the root of $T_{p}$. Move $s$ to serve 
	$x$. In case of caterpillar decomposition, the algorithm reads roughly $\log \log N$ bits.
	\item After serving $x$, find the spanner $T_{q}$ which best approximates the length of the shortest path between $x$ and $y$ in $G$. This can be done if provided with $\lceil \log \mu \rceil$ bits of advice with $x$.
	\item Find the least common ancestor of $x$ and $y$ in $T_{q}$. This can be done by adding $\lceil \log h \rceil$ bits of advice for $x$, where $h$ is the 
	height of $T_{q}$. In case of caterpillar decomposition, this would require roughly $\log \log N$ bits.
	\item Move $s$ to the least common ancestor of $x$ and $y$ and label it as $q$.
\end{itemize}

Thus, since \opt moves the server $s_i$ on the path $P_G = [x_{a_{i,1}},x_{a_{i,2}}, \ldots, x_{a{_{i,n_i}}}]$, our algorithm moves $s_i$ from $x_{a_{i,j}}$ to 
$x_{a_{i,j+1}}$ for each $j$ ($1 \leq j \leq n_i-1$), on the path in the tree which approximates the distance between these two vertices within a multiplicative factor 
of $q+r$. The labels on the servers ensure that the algorithm moves the `correct' servers on the trees. i.e, the ones which were intended to be used. Consequently, the
cost of an algorithm for each server is increased by a multiplicative factor, at most $q+r$. Therefore, the total cost of the algorithm is at most $(q+r) \times \opt$. 
The size of advice for each request is $2\lceil \log \mu \rceil + \oh{\log \log N }$, assuming that the caterpillar decomposition is used. Adding to that an additional  
$k (\log \mu  +  \oh{\log \log N})$ bits for the initial movement of servers completes the proof. \qed
\end{proof}

In recent years, there has been wide interest in providing collective tree spanners for various families of graphs, mostly in the context of message routing in networks. The algorithms which create these spanners run in polynomial time and in some cases linear time. For example, it any planar graph of size $N$ has a system of $\log N$ collective (3,0)-spanners \cite{pezdisp}; every AT-free graph (including interval, permutation, trapezoid, and co-comparability graphs) admits a system of two (1,2)-spanners \cite{WG04}; every chordal graph admits a system of at most $\log N$ collective (1,2)-spanners \cite{SWAT04}; and every Unit Disk Graphs admits a system of $2 \log _{1.5} n +2$ collective (3,12)-spanners \cite{DUGSPANNER}. 

\begin{coro}
For metric spaces of size $N$ and sequences of length $n$, $\oh{n \log \log N}$ bits of advice are sufficient to obtain I) a 3-competitive algorithm for planar graphs 
II) a 3-competitive algorithm for AT-free graph (including interval, permutation, trapezoid, and co-comparability graphs)
III) a 3-competitive algorithm for chordal graphs 
IV) a 15-competitive algorithm for Unit Disk Graphs.
\end{coro}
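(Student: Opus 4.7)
The plan is to derive each part of the corollary as a direct instantiation of Theorem \ref{thspan}, by plugging in, for each graph family, the specific collective tree spanner parameters $\mu$, $q$, and $r$ known from the cited literature. Since Theorem \ref{thspan} guarantees a $(q+r)$-competitive algorithm with $\oh{n(\log \mu + \log \log N)}$ bits of advice whenever the metric admits a system of $\mu$ collective tree $(q,r)$-spanners, every case reduces to (i) quoting the appropriate structural result on spanners, and (ii) observing that $\log \mu$ collapses into the $\oh{\log \log N}$ term in the advice bound.

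Concretely, I would handle the four cases in turn. For planar graphs, I would invoke the result of \cite{pezdisp} that every planar graph on $N$ vertices admits a system of $\oh{\log N}$ collective $(3,0)$-spanners; then $\mu \in \oh{\log N}$ and $q+r = 3$, which yields a $3$-competitive algorithm with $\oh{n(\log\log N + \log\log N)} = \oh{n \log \log N}$ bits of advice. For AT-free graphs (and hence interval, permutation, trapezoid, and co-comparability graphs), the result of \cite{WG04} gives a system of $\mu = 2$ collective $(1,2)$-spanners, so $\log \mu = 1$ and the total advice is $\oh{n \log \log N}$ with competitive ratio $1+2 = 3$. For chordal graphs, \cite{SWAT04} provides $\oh{\log N}$ collective $(1,2)$-spanners, giving ratio $3$ and $\oh{n \log \log N}$ advice as in the planar case. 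Finally, for unit disk graphs, \cite{DUGSPANNER} provides $2 \log_{1.5} N + 2 \in \oh{\log N}$ collective $(3,12)$-spanners, yielding ratio $3+12 = 15$ with the same advice bound.

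Since each application is purely a substitution into Theorem \ref{thspan}, there is no real obstacle beyond being careful that $\log \mu$ is always in $\oh{\log \log N}$ in every listed family, which is immediate from $\mu \in \oh{\log N}$ (or even $\mu = \oh{1}$ in the AT-free case). The only mildly delicate point is to note that the additive $k(\log \mu + \oh{\log \log N})$ term for the initial placement of the $k$ servers, which appears in the proof of Theorem \ref{thspan}, is absorbed into the $\oh{n \log \log N}$ bound whenever $k \le n$ (which is the non-trivial regime; otherwise some requests in $\sigma$ are never served and the statement is vacuous). With this observation, the four bullet points of the corollary follow immediately.
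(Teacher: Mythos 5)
Your proposal is correct and matches the paper's (implicit) argument exactly: the corollary is obtained by substituting into Theorem \ref{thspan} the spanner parameters quoted just before it, namely $\oh{\log N}$ collective $(3,0)$-spanners for planar graphs, two $(1,2)$-spanners for AT-free graphs, $\oh{\log N}$ collective $(1,2)$-spanners for chordal graphs, and $2\log_{1.5}N+2$ collective $(3,12)$-spanners for unit disk graphs, with $\log \mu \in \oh{\log\log N}$ in each case. Your extra remark about absorbing the initial $k(\log\mu+\oh{\log\log N})$ term is harmless but unnecessary, since that accounting is already part of the theorem's stated advice bound.
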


\section*{Concluding Remarks}
For path metric spaces, we showed any 1-competitive algorithm requires an advice of size $\Omega(n)$. This bound is tight as there is an optimal algorithm \cite{WAOA11} which receives $\oh{n}$ bits of advice.
The same lower bound applies for trees, however, the best algorithm for tree receives an advice of $\oh{n \lg \lg N}$. We conjecture that the lower bound argument can be improved for trees to match it with upper bound, and leave this as future work.

\bibliographystyle{splncs03}
\bibliography{refs}
\newpage

\end{document}